\pgfplotsset{compat=1.16}
\newcommand{\cc}[1]{\mathcal{#1}}  
\newcommand{\cb}[1]{\mathbb{#1}}  
\newcommand{\csf}[1]{\normalfont{\textsf{#1}}}  
\newcommand{\csmc}[1]{\textsc{#1}}  
\renewcommand{\max}{\csf{max}_{\subseteq}} 
\newcommand{\card}[1]{\vert #1 \vert}  
\DeclareMathOperator{\imp}{\rightarrow}  
\DeclareMathOperator{\detm}{\uparrow}  
\DeclareMathOperator{\jdet}{\downarrow}  
\DeclareMathOperator{\J}{\cc{J}}  
\DeclareMathOperator{\M}{\cc{M}}  
\DeclareMathOperator{\Hp}{\cc{H}}  
\DeclareMathOperator{\E}{\cc{E}}  
\DeclareMathOperator{\mis}{\csf{MIS}}  
\DeclareMathOperator{\is}{\csf{IS}}  
\DeclareMathOperator{\NP}{\csf{NP}}  
\DeclareMathOperator{\Po}{\csf{P}}  
\DeclareMathOperator{\C}{c}  
\DeclareMathOperator{\poly}{\csf{poly}}  
\DeclareMathOperator{\K}{\cc{K}}  
\newcommand{\U}{X}  
\newcommand{\graph}{G_c}  
\newcommand{\edges}{E_c}  
\newcommand{\IS}{\Sigma}  
\newcommand{\cl}{\phi}  
\newcommand{\cs}{\cc{F}}  
\newcommand{\Sol}{\csf{maxCC}(\IS, \graph)}  
\newcommand{\ie}{i.e.,\xspace}
\newcommand{\st}{{\emph{s.t.}\xspace}}
\newenvironment{decproblem}[3]{
\vspace{0.5em}
\begin{itemize}
	\item[] \csmc{#1}
	\item[] \textbf{Input:} #2
	\item[] \textbf{Output:} #3
\end{itemize}
}{\vspace{0.5em}}
\definecolor{midnight}{RGB}{44, 62, 80}
\definecolor{belize}{RGB}{0, 103, 176}
\definecolor{teal}{RGB}{0, 150, 136}
\definecolor{amethyst}{RGB}{155, 89, 182}
\definecolor{asbestos}{RGB}{127, 140, 141}
\definecolor{clouds}{RGB}{236, 240, 241}
\definecolor{grass}{HTML}{97CE68}
\definecolor{alizarine}{RGB}{231, 76, 60}
\begin{document}

\title{Enumerating maximal consistent closed sets in closure systems}

\titlerunning{Enumerating maximal consistent closed sets}

\author{Lhouari Nourine \and Simon Vilmin \thanks{The second author is funded by the 
CNRS, France, ProFan project.}}
\authorrunning{L. Nourine \and S. Vilmin}

\institute{LIMOS, Université Clermont Auvergne, Aubière, France \\
\email{lhouari.nourine@uca.fr},
\email{simon.vilmin@ext.uca.fr}}

\maketitle 

\begin{abstract}
Given an implicational base, a well-known representation for a closure system, an 
inconsistency binary relation over a finite set, we are interested in 
the problem of 
enumerating all maximal consistent closed sets (denoted by \csmc{MCCEnum} for short). 
We show that \csmc{MCCEnum} cannot be solved in output-polynomial time unless $\Po = 
\NP$, even for lower bounded lattices.
We give an incremental-polynomial time algorithm to solve \csmc{MCCEnum} for closure 
systems with constant Carath\'{e}odory number.
Finally we prove that in biatomic atomistic closure systems \csmc{MCCEnum} can be solved 
in output-quasipolynomial time if minimal generators obey an independence condition, 
which holds in atomistic modular lattices. 
For closure systems closed under union (\ie distributive), \csmc{MCCEnum} is solved by a 
polynomial delay algorithm \cite{hirai2018compact, kavvadias2000generating}.

\keywords{Closure systems, implicational base, inconsistency relation, enumeration 
algorithm}
\end{abstract}

\section{Introduction} 
\label{sec:introduction}

In this paper, we consider binary inconsistency relations (\ie graphs) over 
implicational bases, a well-known representation for closure systems \cite{wild2017joy, 
bertet2018lattices}.
More precisely, we seek to enumerate maximal closed sets of a closure system given by an 
implicational base that are consistent with respect to an inconsistency relation.
We call this problem \csmc{Maximal Consistent Closed Sets Enumeration}, or 
\csmc{MCCEnum} for short.

This problem finds applications for instance in minimization of sub-modular 
functions \cite{hirai2018compact} or argumentation frameworks 
\cite{dung1995acceptability}.
It is moreover a particular case of dualization in closure systems given by an 
implicational bases, ubiquitous in computer science \cite{
demetrovics1992functional, fredman1996complexity, bertet2018lattices}.
This latter problem however cannot be solved in output-polynomial time unless $\Po = \NP$ 
\cite{babin2017dualization} even when the input implicational base has premises of 
size at most two \cite{defrain2020dualization}.
When restricted to graphs and implicational bases with premises of size one, or posets 
equivalently, the problem can be solved in polynomial delay 
\cite{kavvadias2000generating, hirai2018compact}.

More generally, inconsistency relations combined with posets appear also in event 
structures \cite{nielsen1981petri}, representations of median-semilattices 
\cite{barthelemy1993median} or cubical complexes \cite{ardila2012geodesics} in 
which the term \textit{``inconsistency''} is used.
Recently in \cite{hirai2018compact, hirai2020compact}, the authors derive a 
representation for modular semi-lattices based on inconsistency and projective ordered 
spaces \cite{herrmann1994geometric}.
Furthermore, they characterize the cases where given an implicational base and an
inconsistency relation, maximal consistent closed sets coincide with maximal independent 
sets of the inconsistency relation, seen as a graph.

In our contribution, we show first that enumerating maximal consistent closed sets 
cannot be solved in output-polynomial time unless $\Po = \NP$, a surprising result which 
further emphasizes the hardness of dualization in lattices given by implicational bases 
\cite{babin2017dualization, defrain2020dualization}.
In fact, we show that this problem is already intractable for the well-known 
class of lower bounded lattices \cite{freese1995free, day1970simple, 
adaricheva2017optimum}.
On the positive side, we show that when the maximal size of minimal generators is bounded 
by a constant, the problem can be solved in incremental-polynomial time.
As a direct corollary, we obtain that \csmc{MCCEnum} can be solved efficiently in 
a several classes of convex geometries where this parameter, also known as the 
Carath\'{e}odory number, is constant 
\cite{korte2012greedoids}.
Finally, we focus on biatomic atomistic closure systems \cite{bennett1987biatomic, 
birkhoff1985convexity}.
We show that under an independence condition, the size of a minimal generator is 
logarithmic in the size of the groundset.
As a consequence, we get a quasi-polynomial time algorithm for enumerating maximal 
consistent closed sets which can be applied to the well-known class of atomistic modular 
lattices \cite{herrmann1994geometric, wild2000optimal, stern1999semimodular, 
gratzer2011lattice}.

The rest of the paper is organized as follows.
Section \ref{sec:preliminaries} gives necessary definitions about closure 
systems and implicational bases.
In Section \ref{sec:hardness} we show that \csmc{MCCEnum} cannot be solved in 
output-polynomial time, in particular for lower bounded closure systems.
In Section \ref{sec:mingen}, we show that if the size of a minimal generator is bounded 
by a constant, \csmc{MCCEnum} can be solved efficiently.
Section \ref{sec:biatomic} is devoted to the class of biatomic atomistic closure systems. 
We conclude with open questions and problems in \ref{sec:conclusion}.

\section{Preliminaries} 
\label{sec:preliminaries}

All the objects considered in this paper are finite. 
Let $\U$ be a set.
We denote by $2^{\U}$ its powerset.
For any $n \in \cb{N}$, we write $[n]$ for the set $\{1, \dots, n\}$.
We will sometimes use the notation $x_1 \dots x_n$ as a shortcut for $\{x_1, 
\dots ,x_n\}$.
The size of a subset $A$ of $\U$ is denoted by $\card{A}$.
If $\Hp = (\U, \E)$ is a hypergraph, we denote by $\is(\Hp)$ its independent sets (or 
stable sets).
We write $\mis(\Hp)$ for its maximal independent sets.
Similarly, if $G = (\U, E)$ is a graph, its independent sets (resp. maximal independent 
sets) are written $\is(G)$ (resp. $\mis(G)$).

We recall principal notions on lattices and closure systems \cite{gratzer2011lattice}.
A mapping $\cl \colon 2^{\U} \to 2^{\U}$ is a \emph{closure operator} if for any $Y, Z 
\subseteq \U$, $Y \subseteq \cl(Y)$ (extensive), $Y \subseteq Z$ implies $\cl(Y) 
\subseteq \cl(Z)$ (isotone), and $\cl(\cl(Y)) = \cl(Y)$ (idempotent).
We call $\cl(Y)$ the \emph{closure} of $Y$.
The family $\cs = \{\cl(Y) \mid Y \subseteq \U\}$ ordered by set-inclusion forms a 
\emph{closure system} or \emph{lattice}.
A closure system $\cs \subseteq 2^{\U}$ is a set system such that
$\U \in \cs$ and for any $F_1, F_2 \in \cs$, $F_1 \cap F_2$ also belongs to $\cs$.
Elements of $\cs$ are \emph{closed sets} and we say that $F$ is \emph{closed} if $F \in 
\cs$.
Each closure system $\cs$ induces a unique closure operator $\cl$ such that $\cl(Y) = 
\bigcap \{F \in \cs \mid Y \subseteq F\}$, for any $Y \subseteq \U$.
Thus, there is a one-to-one correspondence between closure systems and operators.
Without loss of generality, we will assume that $\cl$ and $\cs$ are \emph{standard}: 
$\cl(\emptyset) = \emptyset$ and for any $x \in \U$, $\cl(x) \setminus \{x\}$ is closed.
Note that $\emptyset$ is thus the minimum element of $\cs$, called the \emph{bottom}.
Similarly, $\U$ is the \emph{top} of $\cs$.

Let $\cl$ be a closure operator with corresponding closure system $\cs$.
Let $F_1, F_2 \in \cs$.
We say that $F_1$ and $F_2$ are \emph{comparable} if $F_1 \subseteq F_2$ or $F_2 
\subseteq F_1$.
They are \emph{incomparable} otherwise.
A subset $\cc{S}$ of $\cs$ is an \emph{antichain} if its elements are pairwise 
incomparable.
If for any $F \in \cs$, $F_1 \subset F \subseteq F_2$ implies $F = F_2$, we say that 
$F_2$ \emph{covers} $F_1$, and denote it $F_1 \prec F_2$.
An \emph{atom} is a closed set covering the bottom $\emptyset$ of $\cs$.
Dually, a \emph{co-atom} is a closed set covered by the top $\U$ of $\cs$.
We denote by $\cc{C}(\cs)$ the set of co-atoms of $\cs$.
Let $M \in \cs$.
We say that $M$ is \emph{meet-irreducible} in $\cs$ if for any $F_1, F_2 \in \cs$, $M = 
F_1 \cap F_2$ entails either $F_1 = M$ or $F_2 = M$.
In this case, $M$ has a unique cover $M^*$ in $\cs$.
The set of meet-irreducible elements of $\cs$ is denoted by $\M(\cs)$.
Dually, $J \in \cs$ is a \emph{join-irreducible} element of $\cs$ if for any $F_1, F_2 
\in \cs$, $J = \cl(F_1 \cup F_2)$ implies $J = F_1$ or $J = F_2$.
Then, $J$ covers a unique element $J_*$ in $\cs$.
We denote by $\J(\cs)$ the join-irreducible elements of $\cs$.
When $\cs$ and $\cl$ are standard, there is a one-to-one correspondence between $\U$ and 
$\J(\cs)$ given by $\J(\cs) = \{\cl(x) \mid x \in \U\}$.
Furthermore, $x_* = \cl(x)_* = \cl(x) \setminus x$.
Consequently, we will identify $\U$ with $\J(\cs)$.

Let $x \in \U$.
A \emph{minimal generator} of $x$ is an inclusion-wise minimal subset $A_x$ of $\U$ such 
that $x \in \cl(A_x)$.
We consider $\{x\}$ as a trivial minimal generator of $x$.
Following \cite{korte2012greedoids}, the \emph{Carath\'{e}odory number} $\C(\cs)$ of 
$\cs$ is the least integer $k$ such that for any $A \subseteq \U$ and any $x 
\in \U$, $x \in \cl(A)$ implies the existence of some $A' \subseteq A$ with $\card{A'} 
\leq k$ such that $x \in \cl(A')$.
At first, this notion was used for convex geometries, but 
its definition applies to any closure system.
Moreover, the Carath\'{e}odory number of $\cs$ is the maximal possible size of a 
minimal generator (see Proposition 4.1 in \cite{korte2012greedoids}, which can be applied 
to any closure system).
A \emph{key} of $\cs$ is a minimal subset $K \subseteq \U$ such that $\cl(K) = \U$.
We denote by $\K$ the set of keys of $\cs$.
The number of keys in $\K$ is denoted by $\card{\K}$.
It is well-known (see for instance \cite{demetrovics1992functional}) that maximal 
independent 
sets $\mis(\K)$ of $\K$, viewed as a hypergraph over $\U$, are exactly co-atoms of $\cs$.
We define arrow relations from \cite{ganter2012formal}.
Let $x \in \U$ and $M \in \M(\cs)$.
We write $x \detm M$ if $x \notin M$ but $x \in M^*$.
Dually, we write $M \jdet x$ if $x \notin M$ but $x_* \subseteq M$.

We move to implicational bases \cite{bertet2018lattices, wild2017joy}. 
An \emph{implication} si an expression of the form $A \imp B$ with $A, B \subseteq \U$.
We call $A$ the \emph{premise} and $B$ the \emph{conclusion}.
A set $\IS$ of implications over $\U$ is an \emph{implicational base} over $\U$.
We denote by $\card{\IS}$ the number of implications in $\IS$.
A subset $F \subseteq \U$ \emph{satisfies} or \emph{models} $\IS$ if for any $A \imp B 
\in \IS$, $A 
\subseteq F$ implies $B \subseteq F$.
The family $\cs = \{F \subseteq \U \mid F \text{ satisfies } \IS 
\}$ is a closure system whose induced closure operator $\phi$ is the \emph{forward 
chaining algorithm}.
This procedure starts from any subset $Y$ of $\U$ and constructs a sequence $Y = Y_0 
\subseteq \dots \subseteq Y_k = \phi(Y)$ of subsets of 
$\U$ such that for any $i \in [k]$, $Y_i = Y_{i - 1} \cup \{B \mid \exists A \imp B \in 
\IS \, \st \, A \subseteq Y_{i - 1} \}$.
The algorithm stops when $Y_{i - 1} = Y_i$.

\begin{figure}[ht!]
\centering 
\begin{subfigure}{0.4\textwidth}
	\centering
	\includegraphics[scale=1.0]{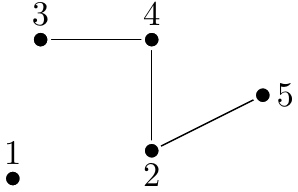}
\end{subfigure}
\begin{subfigure}{0.5\textwidth}
	\centering
	\includegraphics[scale=0.85]{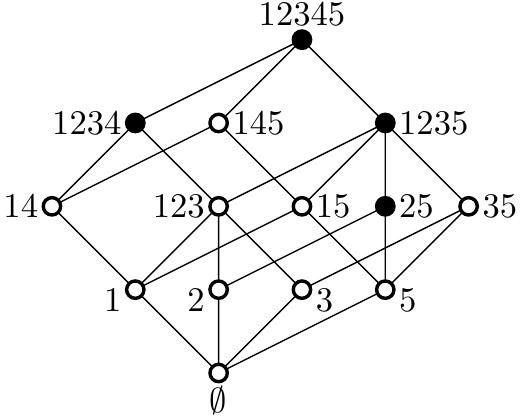}
\end{subfigure}
\caption{
On the left, a consistency-graph $\graph$ over $\U = \{1, 2, 3, 4, 5\}$ with inconsistent 
pairs $34$, $24$ and $25$. 
On the right, the closure system associated to $\IS = \{13 \imp 2, 12 \imp 3, 23 
\imp 1, 4 \imp 1\}$. 
Black and white dots stand for inconsistent and consistent closed sets respectively.
We have $\Sol = \{145, 123, 35\}$.}
\label{fig:running-ex}
\end{figure}

We now introduce our main problem.
Following \cite{hirai2018compact, hirai2020compact, ardila2012geodesics} we call an 
\emph{inconsistency relation} any symmetric and irreflexive relation over $\U$.
Such a relation is sometimes called a \emph{site} \cite{barthelemy1993median} or a 
\emph{conflict relation} \cite{nielsen1981petri}.
Usually, inconsistency relations need to satisfy more conditions in order to capture 
median or modular-semilattices \cite{barthelemy1993median, hirai2020compact}.
As we do not need further restrictions here, we can choose to model inconsistency as a 
graph $\graph = (\U, \edges)$, and call it a \emph{consistency-graph}.
An edge $uv$ of $\edges$ represents an \emph{inconsistent pair} of elements in $\U$.
A subset $Y$ which does not contain any inconsistent pair (\ie an independent set of 
$\graph$) is called \emph{consistent}.
Let $\IS$ be an implicational base over $\U$ and a $\graph = (\U, 
\edges)$ consistency-graph.
We denote by $\Sol$ the set of maximal consistent closed sets of $\cs$, that 
is $\Sol = \max(\cs \cap \csf{IS}(\graph))$.
An example of implicational base along with a consistency-graph is given in Figure 
\ref{fig:running-ex}. 
Our problem is the following.

\begin{decproblem}
{Maximal Consistent closed-sets Enumeration (MCCEnum)}
{An implicational base $\IS$ over $\U$, a non-empty consistency-graph $\graph = (\U, 
\edges)$. }
{The set $\Sol$ of maximal consistent closed sets of $\cs$ with respect 
	to $\graph$.}
\end{decproblem}

Remark that $\U$ is part of the input.
If $\graph$ is empty, \csmc{MCCEnum} is easy to solve as $\U$ is the unique element of 
$\Sol$. 
Hence, we will assume without loss of generality that $\graph$ is not empty.
If $\IS$ is empty, then \csmc{MCCEnum} is equivalent to the enumeration of maximal 
independent sets of a graph which can be efficiently solved \cite{johnson1988generating}.
If premises of $\IS$ have size $1$, the problem also reduces to maximal independent sets 
enumeration \cite{kavvadias2000generating, hirai2018compact}.
In \cite{hirai2020compact} the authors identify, for a fixed $\IS$, the 
consistency-graphs $\graph$ such that $\mis(\graph) = \Sol$.

We conclude with a recall on enumeration algorithms \cite{johnson1988generating}.
Let $\cc{A}$ be an algorithm with input $x$ and output a set of solutions $R(x)$.
We denote by $\card{R(x)}$ the number of solutions in $R(x)$.
We assume that each solution in $R(x)$ has size $\poly(\card{x})$.
The algorithm $\cc{A}$ is running in \emph{output-polynomial} time if its execution time 
is bounded by $\poly(\card{x} + \card{R(x)})$.
It is \emph{incremental-polynomial} if for any $1 \leq i \leq \card{R(x)}$, the time 
spent between the $i$-th and $i + 1$-th output is bounded by $\poly(\card{x} + i)$, and 
the algorithm stops in time $\poly(\card{x})$ after the last output.
If the delay between two solutions output and after the last one is $\poly(\card{x})$, 
$\cc{A}$ has \emph{polynomial-delay}. 
Note that if $\cc{A}$ is running in incremental-polynomial time, it is also 
output-polynomial.
Finally, we say that $\cc{A}$ runs in \emph{output-quasipolynomial} time if is execution 
time is bounded by $N^{\csf{polylog}(N)}$ where $N = \card{x} + \card{R(x)}$.

\section{Closure systems given by implicational bases} 
\label{sec:hardness}

We show that \csmc{MCCEnum} cannot be solved in output-polynomial time unless $\Po = \NP$.
To do so, we use a reduction from the problem of enumerating 
co-atoms of a closure system.

\begin{decproblem}
	{Co-atoms Enumeration (CE)}
	{An implicational base $\IS_Y$ over $Y$.}
	{The co-atoms $\cc{C}(\cs_Y)$ of the closure system $\cs_Y$ associated to $\IS_Y$.}
\end{decproblem}

It is proved by Kavvadias et al. in \cite{kavvadias2000generating} that \csmc{CE} admits 
no output-polynomial time algorithm unless $\Po = \NP$.
Our first step is to prove the following lemma.

\begin{lemma} \label{lem:reduction}
Let $\IS_Y$ be an implicational base over $Y$.
Let $\U = Y \cup \{u, v \}$, $\IS = \IS_Y \cup \{Y \imp uv\}$ and let $\graph = (\U, 
\edges = \{uv\})$ be a consistency-graph.
The following equality holds:
	
\begin{equation} \label{eq:equality}
	\Sol = \bigcup_{C \in \cc{C}(\cs_Y)} \{C \cup \{u\}, C \cup \{v\}\}
\end{equation}

\end{lemma}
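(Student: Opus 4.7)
The plan is to first describe exactly the closed sets of $\cs$ in terms of $\cs_Y$, and then analyze which are maximal consistent. Because the only new implication added is $Y \imp uv$, and the elements $u,v$ do not appear in any implication of $\IS_Y$, a set $F \subseteq \U$ lies in $\cs$ if and only if $F \cap Y \in \cs_Y$ and, whenever $Y \subseteq F$, we have $\{u,v\} \subseteq F$. In particular, any closed set $F$ with $F \neq \U$ and $F \cap Y \subsetneq Y$ can freely contain any subset of $\{u,v\}$.

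For the inclusion $(\supseteq)$, I would fix a co-atom $C$ of $\cs_Y$ and show that $C \cup \{u\}$ (symmetrically $C \cup \{v\}$) is a maximal consistent closed set. Closedness in $\cs$ follows from $C \in \cs_Y$ and $Y \not\subseteq C \cup \{u\}$; consistency is immediate since $v \notin C \cup \{u\}$. For maximality, I consider any closed set $F \supsetneq C \cup \{u\}$: if $v \in F$, then $\{u,v\}\subseteq F$ is inconsistent; otherwise, the extra element lies in $Y\setminus C$, so $F \cap Y \supsetneq C$. Since $C$ is a co-atom of $\cs_Y$ and $F \cap Y$ is closed, this forces $F \cap Y = Y$, hence $Y \subseteq F$, which by the implication $Y \imp uv$ yields $v \in F$, again inconsistent.

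For the reverse inclusion $(\subseteq)$, I would take any $F \in \Sol$ and first exploit consistency to obtain $\{u,v\} \not\subseteq F$; combined with $Y \imp uv$, this forces $Y \not\subseteq F$, so $F \cap Y$ is a proper closed subset of $Y$ in $\cs_Y$ and thus is contained in some co-atom $C$ of $\cs_Y$. It remains to upgrade this containment to the equality $F = C \cup \{u\}$ or $F = C \cup \{v\}$. Observe that $C \cup (F \cap \{u,v\})$ is closed in $\cs$ and consistent, and contains $F$, so by maximality $F \cap Y = C$. Finally, if $F \cap \{u,v\} = \emptyset$ then $F = C$ is strictly contained in the consistent closed set $C \cup \{u\}$ (by the $(\supseteq)$ direction), contradicting maximality; hence $F \cap \{u,v\}$ is a singleton, yielding the desired form.

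The main subtlety, which I would treat carefully, is the last step: ruling out $F \cap \{u,v\} = \emptyset$ requires the $(\supseteq)$ direction (producing a strictly larger consistent closed set), so the two inclusions must be proved in the correct order. Everything else reduces to the structural observation that, in $\cs$, $\{u,v\}$ behave as free elements except for the single triggering condition $Y \subseteq F$.
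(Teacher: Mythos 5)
Your proof is correct and follows essentially the same route as the paper's: both directions rest on the observations that $u,v$ occur in no premise of $\IS$ and that the only way a consistent set can become inconsistent under closure is for $Y \imp uv$ to fire, which ties maximality in $\cs$ to the co-atom property in $\cs_Y$. The only cosmetic difference is in the reverse inclusion, where the paper verifies directly that $S \setminus \{u\}$ is a co-atom by checking that every one-element extension closes to $Y$, whereas you embed $F \cap Y$ into a co-atom and then invoke maximality together with the already-established forward inclusion; both arguments are sound.
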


\begin{proof}
Let $C \in \cc{C}(\cs_Y)$.
We show that $C \cup \{u\}$ and $C \cup \{v\}$ are in $\Sol$.
As no implication of $\IS$ has $u$ or $v$ in its premise, we have that $C \cup \{u\}$ and 
$C \cup \{v\}$ are consistent and closed with respect to $\IS$.
Let $y \in Y \setminus C$.
As $C$ is a co-atom of $\cs_Y$, it must be that $\cl_Y(C \cup \{y\}) = Y$.
As $Y \imp uv$ is an implication of $\IS$, it follows that $uv \subseteq \cl(C \cup \{u, 
y\})$.
Thus, for any $x \in \U \setminus (C \cup\{u\})$, $\cl(C \cup \{u, x\})$ is inconsistent.
We conclude that $C \cup \{u\} \in \Sol$.
Similarly we obtain $C \cup \{v\} \in \Sol$.

Let $S \in \Sol$.
We show that $S$ can be written as $C \cup \{u\}$ or $C \cup \{v\}$ for some co-atom $C$ 
of $\cs_Y$.
First, let $F$ be a consistent closed set in $\cs$ such that $u \notin F$ and $v 
\notin F$.
As $\IS$ has no implication with $u$ or $v$ in its premise, it follows that both $F \cup 
\{u\}$ and $F \cup \{v\}$ are closed and consistent.
Hence, either $u \in S$ or $v \in S$.
Without loss of generality, let us assume $u \in S$.
Let $C = S \setminus \{u\}$.
As $S \in \Sol$, it is closed with respect to $\IS_Y$ and does not contain $Y$.
Thus, $C \in \cs_Y$ and $C \subset Y$.
Let $y \in Y \setminus C$.
As $S \in \Sol$, it must be that $\cl(S \cup \{y\})$ contains the inconsistent pair $uv$ 
of $\graph$.
Hence, $Y \subseteq \cl(S \cup \{y\})$ by construction of $\IS$.
Consequently, we have that $Y = \cl_Y(C \cup \{y\})$ for any $y \in Y \setminus 
C$.
Hence, we conclude that $C \in \cc{C}(\cs_Y)$ as expected.
\qed
\end{proof}

Therefore, if there is an algorithm solving \csmc{MCCEnum} in output-polynomial time, 
it can be used to solve \csmc{CE} within the same running time using the reduction of 
Lemma \ref{lem:reduction}.
Consequently, we obtain the following theorem.

\begin{theorem} \label{thm:coNP-complete}
The problem \csmc{MCCEnum} cannot be solved in output-polynomial time unless $\Po = \NP$.
\end{theorem}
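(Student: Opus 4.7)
The plan is to build a Turing-style reduction from \csmc{CE} to \csmc{MCCEnum} that has polynomial overhead in both the input size and the output size, so that any output-polynomial algorithm for \csmc{MCCEnum} would transfer to \csmc{CE} and contradict the hardness result of Kavvadias et al.\ \cite{kavvadias2000generating}. Lemma \ref{lem:reduction} already supplies exactly the combinatorial equality we need; the only remaining work is to verify that the transformation respects the output-polynomial regime.

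First I would fix an arbitrary instance $\IS_Y$ of \csmc{CE} and construct the instance $(\IS, \graph)$ of \csmc{MCCEnum} prescribed by Lemma \ref{lem:reduction}: set $\U = Y \cup \{u,v\}$, $\IS = \IS_Y \cup \{Y \imp uv\}$ and $\graph = (\U, \{uv\})$. Clearly $\card{\U} = \card{Y}+2$ and $\card{\IS} = \card{\IS_Y}+1$, so this construction is performed in time linear in the size of $\IS_Y$, and in particular $\card{\U}+\card{\IS}+\card{\edges} = \poly(\card{Y}+\card{\IS_Y})$.

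Next, assume for contradiction that some algorithm $\cc{A}$ solves \csmc{MCCEnum} in time $\poly(\card{\U}+\card{\IS}+\card{\edges}+\card{\Sol})$. I would run $\cc{A}$ on $(\IS, \graph)$ and, each time it outputs a solution $S$, emit $S \setminus \{u,v\}$ after deduplication (e.g.\ using a hash table). By Lemma \ref{lem:reduction} the solutions produced by $\cc{A}$ are exactly the sets $C \cup \{u\}$ and $C \cup \{v\}$ as $C$ ranges over $\cc{C}(\cs_Y)$, so the post-processing yields precisely the co-atoms of $\cs_Y$. Moreover $\card{\Sol} = 2\card{\cc{C}(\cs_Y)}$, so the total running time of this procedure is bounded by $\poly(\card{Y}+\card{\IS_Y}+\card{\cc{C}(\cs_Y)})$, which is output-polynomial in the \csmc{CE} instance.

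The main (and really only) point to verify carefully is that the blow-up is polynomial on both sides: the input blow-up is $+2$ elements and $+1$ implication, and the output blow-up is a factor of exactly $2$, so an output-polynomial bound in one regime passes to the other. Concluding, an output-polynomial algorithm for \csmc{MCCEnum} would yield one for \csmc{CE}, which by \cite{kavvadias2000generating} forces $\Po = \NP$. Hence \csmc{MCCEnum} admits no output-polynomial algorithm unless $\Po = \NP$.
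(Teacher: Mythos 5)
Your proposal is correct and follows essentially the same route as the paper: the paper likewise deduces the theorem directly from Lemma \ref{lem:reduction} together with the hardness of \csmc{CE} from \cite{kavvadias2000generating}, observing that an output-polynomial algorithm for \csmc{MCCEnum} would solve \csmc{CE} within the same running time. Your explicit accounting of the input blow-up ($+2$ elements, $+1$ implication) and the output blow-up (factor $2$, with deduplication) is a welcome elaboration of details the paper leaves implicit, but it is not a different argument.
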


In fact, we can strengthen the preceding theorem by a careful analysis of the 
closure system used in the reduction in \cite{kavvadias2000generating}.
More precisely, we show that the problem remains untractable for lower bounded 
closure systems.
These have been introduced with the doubling construction in 
\cite{day1970simple} and then studied in 
\cite{bertet2002doubling, freese1995free, adaricheva2017optimum}.
A characterization of lower bounded lattices is given in 
\cite{freese1995free} in terms of the $D$-relation.
This relations relies on $\J(\cs)$ and we say that $x$ depends on $y$, denoted by $xDy$ 
(recall that we identified $\U$ with $\J(\cs)$) if there exists a meet-irreducible 
element $M \in \M(\cs)$ such that $x \detm M \jdet y$.
A $D$-cycle is a sequence $x_1, \dots, x_k \subseteq \U$ such that $x_1 D x_2 D \dots D 
x_k D x_1$.

\begin{theorem}(Reformulated from Corollary 2.39, \cite{freese1995free}) 
\label{cor:freese-LB} A closure system $\cs$ is lower bounded if and only if it contains 
no $D$-cycle.
\end{theorem}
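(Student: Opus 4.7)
The plan is to derive the equivalence via Day's doubling characterisation \cite{day1970simple}: a finite lattice is lower bounded exactly when it can be produced from the one-element lattice by a sequence of doublings of lower pseudo-intervals, equivalently when the canonical homomorphism from the free lattice onto $\cs$ has a minimum element in every fiber. The $D$-relation records precisely which join-irreducibles are forced to appear in every canonical join representation of another, so cycles in $D$ should obstruct the existence of a canonical minimum.

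For the forward direction I would assume $\cs$ is lower bounded and suppose, for contradiction, that there is a $D$-cycle $x_1 D x_2 D \dots D x_k D x_1$. Each step $x_i D x_{i+1}$ is witnessed by a meet-irreducible $M_i \in \M(\cs)$ with $x_i \detm M_i \jdet x_{i+1}$: any join representation of $x_i$ must contain a summand that lies above $x_{i+1}$ but not below $M_i$. Iterating this substitution around the cycle then yields, in the free lattice over $\J(\cs)$, an infinite strictly descending chain of candidate preimages of $x_1$, contradicting the fact that the fiber of $x_1$ has a minimum.

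For the converse, assume $D$ has no cycles on the finite set $\J(\cs)$; then $D$ is well-founded. I would proceed by induction on $D$-rank to construct, for each $x \in \cs$, a canonical minimum subset $\gamma(x) \subseteq \J(\cs)$ with $\bigvee \gamma(x) = x$: at a join-irreducible $j$, the allowed candidates are read off from the arrows $M \jdet y$ for meet-irreducibles $M$ with $j \detm M$, and $D$-well-foundedness guarantees the recursion terminates. These canonical sets combine into the minimum section of the bounding homomorphism, proving $\cs$ is lower bounded.

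The main obstacle will be the forward direction: turning the informal ``infinite descent'' into a rigorous argument requires careful bookkeeping of canonical join representations and of how the arrows $\detm$ and $\jdet$ propagate through lattice terms---this is essentially the substantive content of the proof in \cite{freese1995free}. Once well-foundedness of $D$ is established, the converse reduces to a routine inductive construction.
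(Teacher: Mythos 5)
First, a point of comparison: the paper does not prove this statement at all --- it is imported, as the heading says, from Corollary 2.39 of Freese--Je\v{z}ek--Nation, and is used as a black box to establish lower boundedness of the reduction instance. So there is no in-paper argument to match your proposal against; the only question is whether your sketch would stand on its own as a proof. It would not, for two reasons. The more serious one is that you have written a plan rather than a proof: you yourself flag that ``turning the informal infinite descent into a rigorous argument \ldots is essentially the substantive content of the proof in \cite{freese1995free}.'' Deferring the forward direction to the very reference the theorem is quoted from means nothing has been proved. The converse direction is likewise only gestured at; ``read off the allowed candidates from the arrows and let well-foundedness terminate the recursion'' does not specify the construction of the lower bound map (in Freese--Je\v{z}ek--Nation this is the induction on the sets $D_k$ of join-irreducibles all of whose nontrivial minimal join covers lie in $D_{k-1}$, together with the verification that the resulting map is a join-homomorphism section).

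The second problem is a concrete false step in the forward direction. You assert that a witness $x_i \detm M_i \jdet x_{i+1}$ forces ``any join representation of $x_i$'' to contain a summand above $x_{i+1}$ and not below $M_i$. This fails for the trivial representation: $x_i$ is join-irreducible, so $x_i = x_i$ is a join representation with the single summand $x_i$, which need not be above $x_{i+1}$. The correct statement concerns \emph{nontrivial minimal join covers} of $x_i$ (equivalently, join covers that refine properly below $x_i$), and the descent argument must be run through those; the relation $x D y$ means precisely that $y$ occurs in some minimal nontrivial join cover of $x$, and the obstruction a $D$-cycle creates is that the elements on the cycle never enter any $D_k$. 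As written, your iteration ``around the cycle'' does not get off the ground. If you want a self-contained proof you would need to (i) establish the equivalence between the arrow-based relation $x \detm M \jdet y$ used in the paper and the join-cover-based $D$-relation, and (ii) carry out the $D_k$ induction in both directions; otherwise the honest move is to do what the paper does and cite the result.
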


\begin{corollary}
The problem \csmc{MCCEnum} cannot be solved in output-polynomial time unless $\Po = \NP$, 
even in lower bounded closure systems.
\end{corollary}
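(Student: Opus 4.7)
The plan is to revisit the reduction of Lemma~\ref{lem:reduction} and verify two points: first, the reduction preserves lower boundedness when passing from $\cs_Y$ to $\cs$; second, the closure system $\cs_Y$ underlying the hardness reduction of Kavvadias \etal\ for \csmc{CE} is itself lower bounded. Combined, these will give the corollary via Theorem~\ref{cor:freese-LB}.

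For the first point, I would describe $\cs$ explicitly. Since the only new implication is $Y \imp uv$ and $u, v$ are otherwise unconstrained, the proper closed sets of $\cs$ are exactly the sets $F \cup S$ with $F \in \cs_Y \setminus \{Y\}$ and $S \subseteq \{u, v\}$. Any such set with $S \neq \{u, v\}$ has several incomparable upper covers (adding the missing element of $\{u, v\}$ gives one, adding an element of $Y \setminus F$ gives another), so it cannot be meet-irreducible. Hence $\M(\cs) = \{F \cup \{u, v\} : F \in \M(\cs_Y)\}$, and the unique upper cover of $F \cup \{u, v\}$ is $F^* \cup \{u, v\}$ (or $\U$ when $F^* = Y$). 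A direct check on the arrow relations then shows that $u$ and $v$ participate in no arrow $\detm$ or $\jdet$, while for $x, y \in Y$ one has $x\, D\, y$ in $\cs$ iff $x\, D\, y$ already holds in $\cs_Y$. By Theorem~\ref{cor:freese-LB}, $\cs$ is lower bounded iff $\cs_Y$ is.

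For the second point, I would inspect the implicational base constructed in the hardness proof of \csmc{CE} and list its meet-irreducibles to check that the induced $D$-relation is acyclic. In such reductions each meet-irreducible tends to correspond to a partial assignment of the encoded SAT instance, and the arrows between variable-elements and clause-elements usually induce a layered dependency that contains no $D$-cycle. Should the original construction admit a $D$-cycle, a minor refinement---such as splitting critical elements or inserting dummy atoms---should suffice to break them while leaving the hardness argument intact.

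The main obstacle is this second step: the $D$-relation of an arbitrary SAT-encoding gadget is not \emph{a priori} acyclic, so either confirming that no $D$-cycle exists or adapting the reduction to guarantee this will require a careful combinatorial inspection of the gadget's meet-irreducibles. The remainder of the proof is then a routine lattice-theoretic verification based on the analysis of $\M(\cs)$ outlined above.
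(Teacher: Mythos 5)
There is a genuine gap, and it sits exactly where you flag it yourself: step 2. The whole content of this corollary is the verification that a \emph{specific} closure system --- the one arising from the Kavvadias--Stavropoulos hardness instance for \csmc{CE}, after the Lemma~\ref{lem:reduction} construction --- contains no $D$-cycle. Your proposal replaces this verification with ``each meet-irreducible \emph{tends to} correspond to a partial assignment,'' ``\emph{usually} induce a layered dependency,'' and ``a minor refinement \emph{should} suffice.'' None of that is a proof, and the fallback (modifying the gadget) would additionally require re-proving hardness of \csmc{CE} for the modified instance. The paper does the concrete work you defer: it writes out the base $\IS_Y = \IS_1 \cup \IS_2 \cup \IS_3$ over $Y = \{x_1,\dots,x_n,y_1,\dots,y_m,z\}$ and runs an element-by-element analysis of the $D$-relation. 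The only delicate case is ruling out cycles inside $\{y_1,\dots,y_m\}$, which uses the specific fact that $y_i \imp z$ is the only implication with $y_i$ in its premise, so every meet-irreducible $M$ with $z \in M$ and $y_i \notin M$ must contain all the other $y_k$'s, killing $y_i \detm M \jdet y_k$. Without an argument of this kind your proof is incomplete.

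A secondary problem: in step 1 your description of $\M(\cs)$ is wrong. You claim every proper closed set $F \cup S$ with $S \neq \{u,v\}$ has several incomparable upper covers. But take $C$ a co-atom of $\cs_Y$ and consider $C \cup \{u\}$: adding any $y \in Y \setminus C$ and closing yields all of $\U$ (via $Y \imp uv$), which is \emph{comparable} with $C \cup \{u,v\}$. So $C \cup \{u\}$ has the unique upper cover $C \cup \{u,v\}$ and is meet-irreducible; the same holds for $C \cup \{v\}$. Hence $\M(\cs) \supsetneq \{F \cup \{u,v\} : F \in \M(\cs_Y)\}$. Your step-1 \emph{conclusion} (the reduction preserves the $D$-relation on $Y$ and introduces no new cycles) can be salvaged, since these extra meet-irreducibles only admit $\detm$-arrows to $u$ or $v$ and so generate no dependencies among elements of $Y$; but the verification as you state it does not go through. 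Note also that the paper takes a different, flatter route: rather than proving a transfer statement ``$\cs$ lower bounded iff $\cs_Y$ lower bounded,'' it analyses the $D$-relation of the combined system $\cs$ over $\U = Y \cup \{u,v\}$ directly, handling $u$ and $v$ as just two more cases. Your modularisation is a reasonable idea, but it does not let you skip the gadget-specific analysis that constitutes the actual proof.
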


\begin{proof}
Consider a positive 3-CNF over $n$ variables and $m$ clauses
\[ \psi(x_1, \dots, x_n) = \bigwedge_{i = 1}^{m} C_i = 
\bigwedge_{i = 1}^{m} (x_{i, 1} \lor x_{i, 2} \lor x_{i, 3}) \]
Let $Y = \{x_1, \dots x_n, y_1, \dots, y_m, z\}$ and consider the following sets of 
implications:
\begin{itemize}
	\item $\IS_1 = \{x_{i, k} x_{j, k} \imp z \mid i \in [m], k \in [3] \}$,
	\item $\IS_2 = \{y_i \imp z \mid i \in [m] \}$,
	\item $\IS_3 = \{x_{i, k} z \imp y_i \mid i \in [m], k \in [3] \}$.
\end{itemize}
And let $\IS_Y = \IS_1 \cup \IS_2 \cup \IS_3$. 
In \cite{kavvadias2000generating} the authors show that \csmc{CE} is 
already intractable for these instances.

Therefore, applying the reduction from Lemma \ref{lem:reduction}, we obtain that 
\csmc{MCCEnum} cannot be solved in output-polynomial time in the following case: $\U = Y 
\cup \{u, v\}$, $\IS = \IS_Y \cup \{Y \imp uv \}$, $\graph = (\U, \edges = \{uv\})$.

Let us show that $\cs$, the closure system associated to $\IS$, is indeed lower bounded.
We proceed by analysing the $D$-relation.
Observe first that $\cs$ is standard.
We begin with $u, v$.
Let $t \in \U \setminus \{u\}$ and $M \in \M(\cs)$ such that $t \detm M$.
As no premise of $\IS$ contains $u$, it must be that $u \in M$.
Hence for any $t \in \U \setminus \{u\}$, $t$ does not depend on $u$.
Applying the same reasoning on $v$, we obtain that no $D$-cycle can contain $u$ or $v$.
Let $x_i \in \U$, $i \in [n]$.
As $x_i$ is the conclusion of no implication in $\IS$, we have that the unique 
meet-irreducible element $M_i$ satisfying $x_i \detm M_i$ is $\U \setminus x_i$.
Therefore, there is no element in $\U \setminus \{x_i\}$ on which $x_i$ depends, so that 
no $D$-cycle can contain $x_i$, for any $i \in [n]$.
Let us move to $z$.
As $y_j \imp z \in \IS$ for any $j \in [m]$, we have ${y_j}_* = \phi(y_j)_* = \{z\}$.
Hence,  $zDy_j$ cannot hold since $M \jdet y_j$ implies $z \in M$, for any $M \in 
\M(\cs)$.
Thus, $z$ only depends on some of the $x_i$'s, $i \in [n]$, and no $D$-cycle can 
contain $z$ either.

Henceforth, the only possible $D$-cycles must be contained in $\{y_1, \dots, y_m\}$.
We show that for any $i, k \in [m]$, $y_i D y_k$ does not hold.
For any $y_i$, $i \in [m]$, we have ${y_i}_* = \{z\}$ as $y_i \imp z \in \IS$.
Hence, a meet-irreducible element $M_i$ satisfying $y_i \detm M_i \jdet y_k$ must contain 
$z$.
Let $F \in \cs$ be any closed set satisfying $y_i \notin F$ but $z \in F$.
Assume there exists some $y_k$ such that $y_k \notin F$.
Then $F \cup \{y_k\} \in \cs$, as $y_k \imp z$ is the only implication having $y_k$ in 
its premise, and $z \in F$.
Therefore, it must be that for any $M_i \in \M(\cs)$ such that $z \in M_i$ and 
$y_i \notin M_i$, $\{y_1, \dots, y_m \} \setminus \{y_i\} \subseteq M_i$ is verified, so 
that $y_i \detm M_i \jdet y_k$ is not possible.
As a consequence $y_i D y_k$  cannot hold, for any $i, k \in [m]$.
We conclude that $\cs$ has no $D$-cycles and that it is lower bounded by Theorem
\ref{cor:freese-LB}.
\qed
\end{proof}

Therefore, there is no algorithm solving \csmc{MCCEnum} in output-polynomial time 
unless $\Po = \NP$ even when restricted to lower bounded closure systems.
In the next section, we consider classes of closure systems where \csmc{MCCEnum} can be 
solved in incremental-polynomial time.

\section{Minimal generators with bounded size}
\label{sec:mingen}

Let $\IS$ be an implicational base over $\U$ and $\graph$ a non-empty consistency-graph.
Observe that $\is(\graph) \cup \{\U\}$ is a closure system where a set $F \subseteq \U$ 
is closed if and only if $F = \U$ or it is an independent set of $\graph$.
From this point of view, elements of $\Sol$, are those maximal proper subsets of $\U$ 
that are both closed in $\cs$ and $\is(\graph) \cup \{\U\}$.
Consequently, the maximal consistent closed sets of $\cs$ with respect to $\graph$ are 
exactly the co-atoms of $\cs \cap (\is(\graph) \cup \{\U\})$.
Now, if we can guarantee that $\K$, the keys of $\cs \cap (\is(\graph) \cup 
\{\U\})$, has polynomial size with respect to $\IS$, $\U$ and $\graph$, we can derive an 
incremental-polynomial time algorithm computing $\Sol$ in two steps: 
\begin{enumerate}
	\item Compute the set of keys $\K$ which has polynomial size with respect to $\U$,
	\item Compute $\mis(\K) = \Sol$.
\end{enumerate}

To identify cases where $\K$ has polynomial size with respect to $\IS, \U$ and $\graph$, 
the first step is to characterize its elements.
To do so, we have to guarantee that a set $Y \subset \U$ contains a key of $\K$ 
whenever $Y$ or $\cl(Y)$ is inconsistent with respect to $\graph$.
Looking at $\graph$ is sufficient to distinguish between consistent and inconsistent 
closed sets of $\cs$.
However, there may be consistent (non-closed) sets $Y$ such that $\cl(Y)$ contains 
an edge of $\graph$.
These will not be seen by just considering $\graph$.
Thus, if $uv$ is the edge of $\graph$ contained in $\cl(Y)$, we deduce that there must be 
a minimal generator $A_u$ of $u$ contained in $Y$, possibly $A_u = \{u\}$.
Similarly, $Y$ contains a minimal generator $A_v$ of $v$.
In particular, keys in $\K$ will share the following property.

\begin{proposition} \label{prop:key-union}
Let $K \in \K$.
Then there exists $uv \in \edges$, a minimal generator $A_u$ of $u$, and a minimal 
generator $A_v$ of $v$ such that $K = A_u \cup A_v$.
\end{proposition}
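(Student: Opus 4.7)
The plan is to reduce the statement to a clean characterization of keys of the combined closure system $\cs \cap (\is(\graph) \cup \{\U\})$ in terms of the original operator $\cl$ and the edges of $\graph$. Denote by $\cl'$ the closure operator of $\cs \cap (\is(\graph) \cup \{\U\})$. The first step I would carry out is to prove the equivalence: for any $Y \subseteq \U$, $\cl'(Y) = \U$ if and only if $\cl(Y)$ contains some edge $uv \in \edges$. For the easy direction, if $\cl(Y)$ already misses every edge of $\graph$, then $\cl(Y)$ itself belongs to the combined system and is different from $\U$ (since $\graph$ is non-empty, $\U$ is inconsistent), so $\cl'(Y) = \cl(Y) \neq \U$. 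Conversely, if $\cl(Y)$ contains $uv \in \edges$, then every closed set containing $Y$ must contain both $u$ and $v$ and is therefore inconsistent, so $\U$ is the only element of $\cs \cap (\is(\graph) \cup \{\U\})$ containing $Y$.

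From this characterization, a key $K \in \K$ is exactly an inclusion-wise minimal subset of $\U$ such that $\cl(K)$ contains some edge $uv \in \edges$. Fix such an edge $uv$ for $K$. Since $u \in \cl(K)$, I would pick an inclusion-wise minimal subset $A_u \subseteq K$ with $u \in \cl(A_u)$, and then argue that $A_u$ is actually a minimal generator of $u$ in $\cs$: any proper subset $A' \subsetneq A_u$ with $u \in \cl(A')$ would still lie inside $K$, contradicting the minimality of $A_u$ among subsets of $K$. The same argument yields a minimal generator $A_v \subseteq K$ of $v$.

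To conclude, I would combine these two generators. We have $A_u \cup A_v \subseteq K$ and $\{u,v\} \subseteq \cl(A_u \cup A_v)$, so $\cl(A_u \cup A_v)$ contains the edge $uv$; by the characterization above, $\cl'(A_u \cup A_v) = \U$. The minimality of $K$ in $\K$ then forces $K = A_u \cup A_v$, as required. The only delicate point is the first step: one must be careful that the combined family is indeed a closure system and that the standing assumption that $\graph$ is non-empty is used to guarantee $\U \notin \is(\graph)$, so that any consistent closed set distinct from $\U$ witnesses $\cl'(Y) \neq \U$. Everything else is a straightforward minimality argument.
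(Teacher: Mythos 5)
Your proof is correct and follows essentially the same route as the paper's: identify that $\cl(K)$ must contain an edge $uv$, extract minimal generators $A_u, A_v \subseteq K$, and conclude $K = A_u \cup A_v$ by minimality of $K$. The only difference is that you spell out explicitly the characterization of keys of the combined system and the existence of the minimal generators inside $K$, which the paper leaves implicit in the discussion preceding the proposition.
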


\begin{proof}
Let $K \in \K$.
By assumption, $\cl(K)$ contains an edge $uv$ of $\graph$.
Thus, there exists minimal generators $A_u$ of $u$ and $A_v$ of $v$ such that $A_u \cup 
A_v \subseteq K$.
Assume that $A_u \cup A_v \subset K$ and let $x \in K \setminus (A_u \cup A_v)$.
As $u \in \cl(A_u)$ and $v \in \cl(A_v)$, we get $uv \in \cl(K \setminus 
\{x\})$, a contradiction with the minimality of $K$.
\qed
\end{proof}

\begin{example}
We consider $\IS$, $\U$ and $\graph$ of Figure \ref{fig:running-ex}.
We have that $\cl(135) = 1235$ is inconsistent as it contains $25$.
However $135$ is consistent with respect to $\graph$.
For this example, we will have $\K = \{135, 34, 24, 25\}$. 
Note that $135$ can be decomposed following Proposition \ref{prop:key-union} as the 
minimal generator $13$ of $2$, and $5$ as a trivial minimal generator for itself.
\end{example}
%

Remark that  $\edges \nsubseteq \K$ in the general case, as there may be an 
implication $u \imp v$ in $\IS$ for some inconsistent pair $uv \in \edges$.
Thus $u$ is a key which satisfies Proposition \ref{prop:key-union} with $A_u = A_v = 
\{u\}$.
It also follows from Proposition \ref{prop:key-union} that $\C(\cs)$ plays an important 
role for \csmc{MCCEnum}.
When no restriction on $\C(\cs)$ holds, $\K$ can have exponential 
size with respect to $\IS$ and $\graph$.
The next example drawn from \cite{kavvadias2000generating} illustrates this exponential 
growth.

\begin{example} \label{ex:ranked-exponential}
Let $\U = \{x_1, \dots, x_n, y_1, \dots, y_n, u, v\}$ and $\IS = \{x_i \imp y_i \mid i 
\in [n]\} \cup \{ y_1 \dots y_n \imp uv \}$.
The consistency-graph is $\graph = (\U, \{uv\})$.
The set of non-trivial minimal generators of $u$ and $v$ is $\{z_1 \dots z_n \mid z_i \in 
\{x_i, y_i\}, i \in [n] \}$.
Moreover, minimal generators of $u$ and $v$ are also the keys of $\cs \cap (\is(\graph) 
\cup \{\U\})$.
Thus, $\card{\K} = 2^n$, which is exponential with respect to $\IS$ and $G$.
Observe that $\IS$ is acyclic \cite{hammer1995quasi, wild2017joy}: for any $x, y \in \U$ 
if $y$ belongs to some minimal generator of $x$, then $x$ is never contained in a minimal 
generator of $y$.
\end{example}

Hence, computing $\Sol$ through the intermediary of $\K$ is in general 
impossible in output-polynomial time.
In fact, this exponential blow up occurs even for small classes of closure 
systems where the Carath\'{e}odory number $\C(\cs)$ is unbounded.
In Example \ref{ex:ranked-exponential} for instance, the closure system induced by $\IS$ 
is acyclic \cite{hammer1995quasi, wild2017joy}, a particular case of lower boundedness 
\cite{adaricheva2017optimum}.

On the other hand, let us assume now that $\C(\cs)$ is bounded by some constant $k \in 
\cb{N}$.
By Proposition \ref{prop:key-union}, every key in $\K$ has at most $2 \times k$ elements.
As a consequence we show in the next theorem that the two-steps algorithm we described 
can be conducted in incremental-polynomial time.

\begin{theorem} \label{thm:incr-poly-k-mingen}
Let $\IS$ be an implicational base over $\U$ with induced $\cs$, and $\graph = (\U, 
\edges)$ a consistency-graph.
If $\C(\cs) \leq k$ for some constant $k \in \cb{N}$,
the problem \csmc{MCCEnum} can be solved in incremental-polynomial time.
\end{theorem}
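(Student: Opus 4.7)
The plan is to implement the two-step scheme outlined before Proposition \ref{prop:key-union}: first compute the key family $\K$ of the closure system $\cs \cap (\is(\graph) \cup \{\U\})$ in polynomial time, then enumerate its maximal independent sets, which coincide with $\Sol$.

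For the first step, observe that under $\C(\cs) \leq k$ every minimal generator has size at most $k$. We therefore enumerate the minimal generators of each $x \in \U$ by scanning all subsets $A \subseteq \U$ of size at most $k$ and testing, via calls to the forward-chaining procedure $\cl$, whether $x \in \cl(A)$ while $x \notin \cl(A')$ for every proper subset $A'$ of $A$. Since $k$ is constant this runs in time polynomial in $\card{\IS}$ and $\card{\U}$, and returns a polynomial-sized family $\cc{G}$ containing every minimal generator (including the trivial ones $\{x\}$). By Proposition \ref{prop:key-union}, $\K$ is exactly the set of minimal elements of
\[
\cc{H} = \bigl\{ A_u \cup A_v : uv \in \edges,\ A_u, A_v \in \cc{G} \bigr\}.
\]
Indeed, the inclusion $\K \subseteq \cc{H}$ is Proposition \ref{prop:key-union}, and conversely every element of $\cc{H}$ has closure containing the inconsistent pair $uv$, hence contains a key. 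Pruning $\cc{H}$ to its antichain of minimal elements then yields $\K$ in polynomial time, with $\card{\K} = \poly(\card{\IS}, \card{\U}, \card{\edges})$ and every key of cardinality at most $2k$.

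The second step reduces to enumerating $\mis(\K)$ for a polynomial-size hypergraph of bounded edge size $2k$. This can be done in incremental-polynomial time by a branching procedure in the spirit of Johnson, Papadimitriou and Yannakakis extended to bounded-dimension hypergraphs: fix a vertex $v \in \U$ and split the search between MIS containing $v$ and MIS avoiding $v$. In the latter case, maximality forces the MIS to include $e \setminus \{v\}$ for some edge $e \in \K$ through $v$, so we branch over these polynomially many choices, each fixing at most $2k - 1$ vertices in the current partial solution. Recursing on the restricted hypergraph yields an enumeration tree whose per-node work and depth are polynomial in $\card{\U}$. The main obstacle in a fully detailed write-up is controlling duplicates across the tree while preserving the incremental-polynomial bound; this is handled by the standard canonical-parent trick, already used for the graph case cited in \cite{johnson1988generating}, and requires no new idea beyond observing that each branching decision is determined by a bounded-size edge of $\K$. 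Chaining the two steps then produces $\Sol$ within the claimed complexity.
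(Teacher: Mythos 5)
Your proposal is correct and follows the paper's two-step scheme exactly: compute the keys $\K$ of $\cs \cap (\is(\graph) \cup \{\U\})$, which have size at most $2k$ and polynomial cardinality by Proposition \ref{prop:key-union}, and then enumerate $\mis(\K) = \Sol$. The differences lie only in how the two subroutines are realized. For the first step the paper runs the Lucchesi--Osborn key-enumeration algorithm on $\IS \cup \{uv \imp \U \mid uv \in \edges\}$, whereas you brute-force all candidate minimal generators of size at most $k$ and take the inclusion-minimal elements of the family of unions $A_u \cup A_v$ over edges $uv$; your identification of $\K$ with these minimal elements is sound (every key lies in your family by Proposition \ref{prop:key-union}, every member of the family contains a key, and minimality transfers in both directions), and for constant $k$ this is a legitimate, arguably more self-contained, alternative. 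For the second step the paper simply cites Eiter and Gottlob, whose algorithm enumerates the maximal independent sets of a hypergraph with edges of bounded size in incremental-polynomial time; you instead sketch a Johnson--Papadimitriou--Yannakakis-style branching and defer the duplicate control to a ``canonical-parent trick''. That deferral is the one soft spot of your write-up: outputting each maximal independent set exactly once while preserving the incremental-polynomial bound is precisely the nontrivial content of the result you are re-deriving, so you should either cite \cite{eiter1995identifying} as the paper does or carry out the priority-queue/lexicographic argument in full rather than asserting it. With that reference supplied, your proof is complete and equivalent to the paper's.
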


\begin{proof}
The set of keys $\K$ can be computed in incremental-polynomial time with respect to $\K$, 
$\IS$, $\U$ and $\graph$ using the algorithm of Lucchesi and Osborn 
\cite{lucchesi1978candidate} with input $\IS' = \IS \cup \{uv \imp \U \mid uv \in 
\edges\}$.
Observe that the closure system associated to $\IS'$ is exactly $\cs \cap \{\is(\graph) 
\cup \{\U\}\}$.
Indeed, a consistent closed set of $\cs$ models $\IS'$ and a subset $F \subseteq \U$ 
which satisfies $\IS'$ must also satisfy $\IS$ and being an independent set of $\graph$ 
if $F \subset \U$.
Note that $\K$ is then computed in time $\csf{poly}(\card{\IS} + \card{\U} + 
\card{\graph} + \card{\K})$.
As the total size of $\K$ is bounded by $\card{\U}^{2k}$ by Proposition 
\ref{prop:key-union}, we get that $\K$ is computed in time $\csf{poly}(\card{\IS} + 
\card{\U} + \card{\graph})$.

Then, we apply the algorithm of Eiter and Gottlob \cite{eiter1995identifying} to compute 
$\mis(\K) = \Sol$ which runs in incremental polynomial time.
Since $\K$ has polynomial size with respect to $\card{\U}$, the delay between the $i$-th 
and $(i+1)$-th solution of $\Sol$ output is bounded by $\poly(\card{\U}^{2k} + i)$, that 
is  $\poly(\card{\U} + i)$.
Furthermore, the delay after the last output is also bounded by $\poly(\card{\U}^{2k}) = 
\poly(\card{\U})$.
As the time spent before the first solution output is $\csf{poly}(\card{\IS} + \card{\U} 
+ \card{\graph})$, the whole algorithm has incremental delay as expected.
\qed
\end{proof}

To conclude this section, we show that Theorem \ref{thm:incr-poly-k-mingen} applies to 
various classes of closure systems present in the theory of convex geometries 
\cite{korte2012greedoids}.

A closure system $\cs$ is \emph{distributive} if for any $F_1, F_2 \in \cs$, $F_1 \cup 
F_2 \in \cs$.
Implicational bases of distributive closure systems have premises of size one 
\cite{gratzer2011lattice}.

Let $P = (\U, \leq)$ be a partially ordered set, or poset.
A subset $Y \subseteq \U$ is \emph{convex} in $P$ if for any triple $x \leq y \leq z$, 
$x, z \in Y$ implies $y \in Y$.
The family $\{Y \subseteq \U \mid Y \textrm{ is convex in } P \}$ is known to 
be closure system over $\U$ \cite{birkhoff1985convexity, 
kashiwabara2010characterizations}.

Let $G = (\U, E)$ be a graph.
We say that $G$ is \emph{chordal} if every it has no induced cycle of size $\geq 4$.
A \emph{chord} in a path from $x$ to $y$ is an edge connecting to non-adjacent vertices 
of the path. 
A subset $Y$ of $\U$ is \emph{monophonically convex} in $G$ if for every 
pair $x, y$ of elements in $Y$, every $z \in \U$ which lies on a 
chordless path from $x$ to $y$ is in $Y$.
The family $\{Y \subseteq \U \mid Y \textrm{ is monophonically convex in } G \}$ is a 
closure system \cite{farber1986convexity, korte2012greedoids}.

Finally, let $\U \subseteq \cb{R}^n$, $n \in \cb{N}$, be a finite set of points, and 
denote by $\csf{ch}(Y)$ the \emph{convex hull} of $Y$.
The set system $\{\csf{ch}(Y) \mid Y \subseteq \U\}$ forms a closure system 
\cite{korte2012greedoids} usually known as an \emph{affine convex geometry}.

\begin{corollary}
Let $\IS$ be an implicational base over $\U$ and $\graph = (\U, \edges)$.
\csmc{MCCEnum} can be solved in incremental-polynomial time in the following cases:
\begin{itemize}
	\item $\cs$ is distributive,
	\item $\cs$ is the family of convex subsets of a poset,
	\item $\cs$ is the family of monophonically convex subsets of a chordal graph,
	\item $\cs$ is an affine convex geometry in $\cb{R}^k$ for a fixed constant $k$.
\end{itemize}
\end{corollary}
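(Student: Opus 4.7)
The plan is to reduce each of the four listed cases to Theorem \ref{thm:incr-poly-k-mingen} by exhibiting a constant bound on the Carath\'{e}odory number $\C(\cs)$. Recall that by the remark following the definition, $\C(\cs)$ coincides with the maximum size of a minimal generator in $\cs$, so in each case it is enough to bound minimal generators uniformly.

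For distributive closure systems, we invoke the fact cited just before the corollary that an implicational base for $\cs$ can be chosen with premises of size one. Hence, for every $x \in \U$, any non-trivial minimal generator produced by forward chaining has size one, giving $\C(\cs) = 1$. For the convex subsets of a poset $P = (\U, \leq)$, the closure of $Y$ is $\{z \in \U \mid \exists\, x, y \in Y, \; x \leq z \leq y\}$, so every $z \in \cl(Y)$ lies between two elements of $Y$ and hence is generated by at most two of them; thus $\C(\cs) \leq 2$. For monophonically convex subsets of a chordal graph, the standard Carath\'{e}odory-type result of Duchet / Farber--Jamison asserts that any element in the monophonic closure of $Y$ already lies in the monophonic closure of at most two elements of $Y$, so again $\C(\cs) \leq 2$; I would cite \cite{farber1986convexity, korte2012greedoids}. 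Finally, for affine convex geometries in $\cb{R}^k$ with $k$ fixed, the classical theorem of Carath\'{e}odory yields $\C(\cs) \leq k + 1$, which is constant since $k$ is fixed.

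With a constant bound on $\C(\cs)$ in hand for each of the four situations, Theorem \ref{thm:incr-poly-k-mingen} applies directly and yields an incremental-polynomial time algorithm for \csmc{MCCEnum} in every case.

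The only potential obstacle is the monophonic convexity case, where the Carath\'{e}odory bound is not purely syntactic but relies on chordality: in a chordal graph, any chordless path between two vertices of $Y$ passing through $z$ can be split using a simplicial vertex argument so that $z$ already lies on a chordless path between two of them. I would therefore cite the precise statement from \cite{farber1986convexity} rather than reprove it, and spend the bulk of the exposition merely verifying that the hypotheses of Theorem \ref{thm:incr-poly-k-mingen} are met. All other cases are essentially immediate consequences of standard definitions.
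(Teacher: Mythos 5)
Your proposal is correct and follows essentially the same route as the paper: bound the Carath\'{e}odory number by a constant in each of the four cases and invoke Theorem~\ref{thm:incr-poly-k-mingen}. The only (immaterial) divergence is the affine case, where you use the classical Carath\'{e}odory bound $k+1$ while the paper cites \cite{korte2012greedoids} for $k-1$; both are constants for fixed $k$, so the conclusion is unaffected.
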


\begin{proof}
Distributive lattices have Carath\'{e}odory number $1$ as they can be represented by 
implicational bases with singleton premises.
The family of convex subsets of a poset has Carath\'{e}odory number 2 
\cite{kashiwabara2010characterizations} (Corollary 13).
The family of monophonically convex subsets of a chordal graphs have Carath\'{e}odory 
number at most 2 \cite{farber1986convexity} (Corollary 3.4).
The Carath\'{e}odory number of an affine convex geometry in $\cb{R}^k$ is $k - 1$ 
(see for instance \cite{korte2012greedoids}, p. 32). 
\qed
\end{proof}

In the distributive case, the algorithm can perform in polynomial delay using 
the algorithm of \cite{johnson1988generating} since $\K$ will be a graph by Proposition 
\ref{prop:key-union}.
This connects with previous results on distributive closure systems by Kavvadias et al. 
\cite{kavvadias2000generating}.

\section{Biatomic atomistic closure systems} 
\label{sec:biatomic}

In this section, we are interested in biatomic atomistic closure systems.
Namely, we show that when minimal generators obey an independence condition, the size of 
$\U$ is exponential with respect to $\C(\cs)$. 
To do so, we show that in biatomic atomistic closure systems, each subset of a minimal 
generator is itself a minimal generator.
This result applies to atomistic modular closure systems, which can be represented by 
implications with premises of size at most $2$ \cite{wild2000optimal}.
This suggests that \csmc{MCCEnum} becomes more difficult when implications have binary 
premises.

First, we need to define atomistic biatomic closure systems.
let $\cs$ be a closure system over $\U$ with associated closure operator $\cl$.
We say that $\cs$ is \emph{atomistic} if for any $x \in \U$, $\cl(x) = \{x\}$.
Equivalently, $\cs$ is atomistic if its join-irreducible elements equal its atoms.
Note that in a standard closure system, an atom is a singleton element.
Biatomic closure systems have been studied by Birkhoff and Bennett 
in \cite{bennett1987biatomic, birkhoff1985convexity}.
We reformulate their definition in terms of closure systems.
A closure system $\cs$ is \emph{biatomic} if for every closed sets $F_1, F_2 \in 
\cs$ and any atom $\{x\} \in \cs$, $x \in \cl(F_1 \cup F_2)$ implies the 
existence of atoms $\{x_1\} \subseteq F_1$, $\{x_2\} \subseteq F_2$ such that $x \in 
\cl(x_1x_2)$.
In atomistic closure systems in particular, the biatomic condition applies to every 
element of $\U$.
Hence the next property of biatomic atomistic closure systems.

\begin{proposition} \label{prop:biatomicity}
Let $\cs$ be a biatomic atomistic closure system.
Let $F \in \cs$ and  $x, y \in \U$ with $x, y \notin F$.
If $y \in \cl(F \cup \{x\})$, then there exists an element $z \in F$ such that $y \in 
\cl(xz)$.
\end{proposition}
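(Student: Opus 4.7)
The plan is to apply the biatomicity axiom directly to the pair $(F, \{x\})$, using atomisticity to guarantee that the singleton $\{x\}$ is a closed set of $\cs$.

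First I would record that, since $\cs$ is both standard and atomistic, every element $w \in \U$ satisfies $\cl(w) = \{w\}$: indeed $\cl(w)$ is join-irreducible, hence an atom by atomisticity, and in a standard closure system atoms are singletons. In particular $\{x\}$ and $\{y\}$ both lie in $\cs$, and $\{y\}$ is itself an atom.

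Then, setting $F_1 := F$ and $F_2 := \{x\}$, the hypothesis $y \in \cl(F \cup \{x\})$ is exactly $y \in \cl(F_1 \cup F_2)$. Applying the biatomicity axiom to the atom $\{y\}$ yields atoms $\{z\} \subseteq F_1$ and $\{x'\} \subseteq F_2$ with $y \in \cl(zx')$. The inclusion $\{x'\} \subseteq \{x\}$ forces $x' = x$, and since $x \notin F$ while $z \in F$ we necessarily have $z \neq x$. This provides the required $z \in F$ satisfying $y \in \cl(xz)$.

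The statement is in essence a translation of the biatomic condition into the elementwise language used throughout the section; the only piece of content is the observation, supplied by atomisticity plus standardness, that every singleton $\{w\}$ with $w \in \U$ is a closed atom and may therefore be plugged into the biatomicity axiom. Accordingly I do not expect any serious obstacle beyond checking this identification of atoms with elements of $\U$.
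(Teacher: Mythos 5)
Your proposal is correct and matches the paper's proof, which likewise observes that atomisticity makes $\{x\}$ a closed set and then applies the biatomicity axiom to the pair $F$, $\{x\}$ with the atom $\{y\}$. The extra bookkeeping you supply (identifying singletons with atoms and forcing $x'=x$) is exactly the implicit content of the paper's one-line argument.
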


\begin{proof}
In atomistic closure systems, every element of $\U$ is closed, therefore we apply the 
definition to the closed sets $F$ and $\{x\}$.
\qed
\end{proof}

We will also make use of the following folklore result about minimal generators.
We give a proof for self-containment.

\begin{proposition} \label{prop:trace}
If $A_x$ is a minimal generator of $x \in \U$, then $\cl(A) \cap A_x 
= A$ for any $A \subseteq A_x$. 
\end{proposition}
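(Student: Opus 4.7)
The plan is a short direct argument using the three defining properties of a closure operator (extensivity, isotonicity, idempotence) together with the inclusion-wise minimality of $A_x$. One inclusion is immediate: since $A \subseteq A_x$ and $A \subseteq \cl(A)$ by extensivity, we get $A \subseteq \cl(A) \cap A_x$ for free. All the work is in the other direction.

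For the reverse inclusion, I would argue by contradiction. Suppose there is some $y \in (\cl(A) \cap A_x) \setminus A$. The goal is to show that $A_x \setminus \{y\}$ still generates $x$, contradicting minimality. Since $y \notin A$, we have $A \subseteq A_x \setminus \{y\}$, so isotonicity gives $\cl(A) \subseteq \cl(A_x \setminus \{y\})$, hence $y \in \cl(A_x \setminus \{y\})$. Combining this with extensivity applied to $A_x \setminus \{y\}$ yields $A_x \subseteq \cl(A_x \setminus \{y\})$. Applying $\cl$ again and using idempotence, we obtain $\cl(A_x) \subseteq \cl(A_x \setminus \{y\})$, so $x \in \cl(A_x \setminus \{y\})$. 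This contradicts the minimality of $A_x$, so no such $y$ exists and $\cl(A) \cap A_x \subseteq A$.

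There is no real obstacle here; the statement is essentially a standard ``trace'' property of minimal generators and follows from closure axioms alone, without invoking biatomicity or atomisticity. The only subtle point is to make sure the contradiction uses $A_x \setminus \{y\}$ rather than $A$ itself: it is not enough that $y \in \cl(A)$, because we need to absorb \emph{all} of $A_x$ (not just $A$) into a closure of a proper subset of $A_x$ to invoke the minimality hypothesis, and this is exactly what the step $A_x \subseteq \cl(A_x \setminus \{y\})$ accomplishes.
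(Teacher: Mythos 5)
Your proof is correct and follows essentially the same route as the paper's: both establish the easy inclusion by extensivity, then derive a contradiction with minimality by showing that any extra element $a \in (\cl(A)\cap A_x)\setminus A$ satisfies $a \in \cl(A_x \setminus \{a\})$, forcing $x \in \cl(A_x\setminus\{a\})$. You merely spell out the idempotence step that the paper leaves implicit.
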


\begin{proof}
First, we have that $A \subseteq \cl(A) \cap A_x$ as $A \subseteq \cl(A)$ and $A 
\subseteq A_x$.
Now suppose that there exists $a \in \cl(A) \cap A_x$ such that $a \notin A$.
Then, $a \in \cl(A_x \setminus \{a\})$ as $A \subseteq A_x \setminus \{a\}$.
Hence, $\cl(A_x) = \cl(A_x \setminus \{a\})$ and $x \in \cl(A_x \setminus \{a\})$, a 
contradiction with $A_x$ being a minimal generator of $x$.
\qed
\end{proof}

Our first step is to show that in a biatomic atomistic closure system, if $A_x$ is a 
minimal generator for some $x \in \U$, then every non-empty subset $A$ of $A_x$ is itself 
a minimal generator for some $y \in \U$. 
We prove this statement in Lemmas \ref{lem:induction} and \ref{lem:mingen}.
Recall that an element $x \in \U$ is a (trivial) minimal generator of itself.

\begin{lemma} \label{lem:induction}
Let $x \in \U$ and let $A_x$ be a minimal generator of $x$ with size $k \geq 2$.
Then for any $a_i \in A_x$, $i \in [k]$, there exists $y_i \in \U$ such that $A_x 
\setminus \{a_i\}$ is a minimal generator of $y_i$.
\end{lemma}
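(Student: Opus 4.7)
The plan is to extract $y_i$ directly from biatomicity (Proposition~\ref{prop:biatomicity}) and then verify minimality using the trace property (Proposition~\ref{prop:trace}) together with the minimality of $A_x$. Set $B := A_x \setminus \{a_i\}$, so $\card{B} = k - 1 \geq 1$. Two simple facts about $B$ will be needed: first, $x \notin \cl(B)$, because $A_x$ is a minimal generator of $x$; second, $a_i \notin \cl(B)$, since Proposition~\ref{prop:trace} applied with $A = B \subseteq A_x$ yields $\cl(B) \cap A_x = B$, so no element of $A_x \setminus B = \{a_i\}$ lies in $\cl(B)$. Since $\cl(B)$ is closed and $x \in \cl(A_x) = \cl(\cl(B) \cup \{a_i\})$, Proposition~\ref{prop:biatomicity} applies with $F := \cl(B)$, furnishing some $z \in \cl(B)$ such that $x \in \cl(\{a_i, z\})$.

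The candidate is $y_i := z$. By construction $z \in \cl(B) \subseteq \U$, so $B$ already generates $z$; it remains to show inclusion-minimality. Suppose otherwise: some proper subset $C \subsetneq B$ satisfies $z \in \cl(C)$. Then
\[ x \in \cl(\{a_i, z\}) \subseteq \cl(C \cup \{a_i\}), \]
so $C \cup \{a_i\}$ is a proper subset of $A_x$ that still generates $x$, contradicting the minimality of $A_x$. Hence $B$ is a minimal generator of $z$, and setting $y_i := z$ completes the argument.

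The main obstacle is verifying the two non-membership hypotheses needed to trigger Proposition~\ref{prop:biatomicity}; this is precisely where Proposition~\ref{prop:trace} does real work, ruling out $a_i \in \cl(B)$. The boundary case $k = 2$ is harmless: then $B = \{a_j\}$ for the remaining element $a_j$, the biatomic step degenerates (the only available witness is $z = a_j$), and the contradiction argument still goes through because the only proper subset of $B$ is $\emptyset$, with $\cl(\emptyset) = \emptyset$ by standardness of $\cs$. No induction on $k$ is necessary, so this lemma then serves as the atomic step on which the forthcoming Lemma~\ref{lem:mingen} can iterate to peel $A_x$ down to any desired non-empty subset.
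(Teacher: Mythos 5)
Your proof is correct and follows essentially the same route as the paper: establish $a_i \notin \cl(A_x\setminus\{a_i\})$ and $x \notin \cl(A_x\setminus\{a_i\})$ via Proposition~\ref{prop:trace} and minimality, apply Proposition~\ref{prop:biatomicity} to $F=\cl(A_x\setminus\{a_i\})$ to extract $y_i$ with $x\in\cl(a_iy_i)$, and then derive minimality of $A_x\setminus\{a_i\}$ as a generator of $y_i$ from the minimality of $A_x$. Your closing contradiction is in fact slightly more direct than the paper's (you observe that $C\cup\{a_i\}$ is already a proper subset of $A_x$ generating $x$, whereas the paper detours through a set of the form $A_x\setminus\{a_j\}$), but the underlying idea is identical.
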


\begin{proof}
Let $A_x = \{a_1, \dots, a_k\}$ be a minimal generator of $x$ such that $k \geq 2$.
Then, for any $a_i \in A_x$, $i \in [k]$, we have $a_i \notin \cl(A_x \setminus \{a_i\})$ 
by Proposition \ref{prop:trace}.
However, we have $x \in \cl(\{a_i\} \cup \cl(A_x \setminus \{a_i\})) = \cl(A_x)$.
Thus, by Proposition \ref{prop:biatomicity}, there must exists $y_i \in 
\cl(A_x \setminus \{a_i\})$ such that $x \in \cl(a_iy_i)$.

Let us show that $A_x \setminus \{a_i\}$ is a minimal generator of $y_i$.
Assume for contradiction this is not the case.
As $y_i \in \cl(A_x \setminus \{a_i\})$, there must be a proper subset $A$ of $A_x 
\setminus \{a_i\}$ which is a minimal generator for $y_i$.
Note that since $A_x$ has at least $2$ elements, at least one proper subset of $A_x 
\setminus \{a_i\}$ exists.
As $A \subset A_x \setminus \{a_i\}$, there exists $a_j \in A_x$, $a_j \neq a_i$, such 
that $a_j \notin A$.
Therefore, $A \subseteq A_x \setminus \{a_j\}$ and $\cl(A) \subseteq \cl(A_x \setminus 
\{a_j\})$.
More precisely, $y_i \in \cl(A)$ and hence $y_i \in \cl(A_x \setminus \{a_j\})$.
However, we also have that $a_i \in \cl(A_x \setminus \{a_j\})$, and since $x \in 
\cl(a_iy_i)$, we must have $x \in \cl(A_x \setminus \{a_j\})$, a contradiction with $A_x$ 
being a minimal generator of $x$.
Thus, we deduce that $A_x \setminus \{a_i\}$ is a minimal generator for $y_i$, concluding 
the proof.
\qed
%
\end{proof}


In the particular case where $A_x$ has only two elements, say $a_1$ and $a_2$, then $A_x 
\setminus \{a_1\} = \{a_2\}$ and the element $a_2$ is a trivial minimal generator of 
itself.
By using inductively Lemma \ref{lem:induction} on the size of $A_x$, one can derive the 
next straightforward lemma.

\begin{lemma} \label{lem:mingen}
Let $\cs$ be a biatomic atomistic closure system.
Let $A_x$ be a minimal generator of some $x \in \U$.
Then, for any $A \subseteq A_x$ with $A \neq \emptyset$, there exists $y \in \U$ such 
that $A$ is a minimal generator of $y$.
\end{lemma}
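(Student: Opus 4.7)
The plan is to prove Lemma \ref{lem:mingen} by induction on $n = \card{A_x} - \card{A}$, the number of elements that must be removed from $A_x$ to obtain $A$. To make the recursion go through, the induction hypothesis must be formulated uniformly: for every biatomic atomistic closure system $\cs$, every $x \in \U$, every minimal generator $A_x$ of $x$, and every non-empty $A \subseteq A_x$ with $\card{A_x} - \card{A} \leq n$, there exists some $y \in \U$ such that $A$ is a minimal generator of $y$.

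For the base case $n = 0$, we have $A = A_x$ and the conclusion holds with $y = x$. For the inductive step, assume the statement up to $n$ and take $A \subseteq A_x$ with $\card{A_x} - \card{A} = n + 1$. Because $A \neq \emptyset$, we have $\card{A_x} \geq 2$, so we may pick some $a \in A_x \setminus A$ and form $B = A_x \setminus \{a\}$. Lemma \ref{lem:induction} applies to the minimal generator $A_x$ and the element $a$, yielding $z \in \U$ such that $B$ is a minimal generator of $z$. Since $A \subseteq B$ and $\card{B} - \card{A} = n$, the induction hypothesis applied this time to the minimal generator $B$ of $z$ produces the desired $y \in \U$ such that $A$ is a minimal generator of $y$.

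The only subtlety worth flagging is the phrasing of the induction hypothesis: it must quantify over all minimal generators of all elements, not just over subsets of the original $A_x$, because each invocation of Lemma \ref{lem:induction} replaces the current pair $(A_x, x)$ by a strictly smaller pair $(B, z)$ that must itself be fed back into the recursion. With this convention in place, the induction closes cleanly and the biatomicity and atomisticity assumptions are used exactly once per step, through the call to Lemma \ref{lem:induction}; no other ingredient is required.
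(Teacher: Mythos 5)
Your proof is correct and follows exactly the route the paper intends: the paper states that Lemma \ref{lem:mingen} follows ``by using inductively Lemma \ref{lem:induction} on the size of $A_x$'' and gives no further detail, so your write-up is simply the careful, spelled-out version of the same induction, including the correct observation that the induction hypothesis must range over all minimal generators rather than only subsets of the original $A_x$.
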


Thus, for a given minimal generator $A_x$ of $x$, any non-empty subset $A$ of $A_x$ is 
associated to some $y \in \U$.
We show next than when $A_x$ also satisfies an independence condition, $A$ will be the 
unique subset of $A_x$ associated to $y$.
Following \cite{gratzer2011lattice}, we reformulate the definition of independence in an 
atomistic closure system $\cs$, but restricted to its atoms.
A subset $Y$ of $\U$ is \emph{independent} in $\cs$ if for any $Y_1, Y_2 \subseteq Y$,
$\cl(Y_1 \cap Y_2) = \cl(Y_1) \cap \cl(Y_2)$. 

\begin{lemma} \label{lem:unique}
Let $\cs$ be a biatomic atomistic closure system.
Let $A_x$ be an independent minimal generator of $x \in \U$, and let $A$ be a non-empty 
subset of $A_x$.
Then, there exists $y \in \U$ such that $A$ is the unique minimum subset of $A_x$ 
satisfying $y \in \cl(A)$.
\end{lemma}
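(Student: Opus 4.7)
The plan is to rely on Lemma \ref{lem:mingen} together with the independence assumption on $A_x$. By Lemma \ref{lem:mingen}, since $A$ is a non-empty subset of $A_x$, there exists $y \in \U$ such that $A$ is a minimal generator of $y$. This already handles the existence part of the statement; it remains to prove that $A$ is the unique minimum subset of $A_x$ with $y \in \cl(A)$, which I read as: every $B \subseteq A_x$ satisfying $y \in \cl(B)$ contains $A$.

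To prove this minimality property, I would fix an arbitrary $B \subseteq A_x$ with $y \in \cl(B)$ and aim to show $A \subseteq B$. The key tool is the independence condition applied to the pair $A, B$ of subsets of $A_x$, which yields
\[
\cl(A \cap B) = \cl(A) \cap \cl(B).
\]
Since $y$ lies in both $\cl(A)$ and $\cl(B)$, the right-hand side contains $y$, and hence $y \in \cl(A \cap B)$.

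Finally I would conclude by appealing to the minimality of $A$ as a generator of $y$. Since $A \cap B \subseteq A$ and $y \in \cl(A \cap B)$, the minimality of $A$ forces $A \cap B = A$, that is $A \subseteq B$. Taking $B = A$ shows that $A$ is a minimum subset of $A_x$ generating $y$ in its closure, and the previous argument shows it is the only such minimum subset.

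I do not foresee a real obstacle: the proof is essentially a one-line combination of Lemma \ref{lem:mingen}, the independence equation, and the definition of a minimal generator. The only subtlety worth a sentence is that $A \cap B$ cannot be empty, since in a standard closure system $\cl(\emptyset) = \emptyset$ cannot contain the element $y \in \U$; otherwise the minimality step would be vacuous. Note also that the statement does not assert that all $y \in \U$ with $y \in \cl(A)$ arise this way, only that for our chosen $y$ (the one produced by Lemma \ref{lem:mingen}) the subset $A$ is the unique $\subseteq$-minimum witness inside $A_x$.
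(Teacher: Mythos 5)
Your proof is correct and follows essentially the same route as the paper: existence via Lemma \ref{lem:mingen}, then the independence identity $\cl(A \cap B) = \cl(A) \cap \cl(B)$ combined with the minimality of $A$ as a generator of $y$ to force $A \subseteq B$. The paper phrases the second step contrapositively and routes the strict containment through Proposition \ref{prop:trace}, but the underlying argument is identical.
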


\begin{proof}
Let $A_x$ be an independent minimal generator of $x \in \U$, and let $A$ be a non-empty 
subset of $A_x$.
By Lemma \ref{lem:mingen}, there exists $y \in \U$ such that $A$ is a minimal generator 
for $y$, which implies $y \in \cl(A)$.

To prove that $A$ is the unique minimum subset of $A_x$ such that $y \in \cl(A)$, 
we show that for any $B \subseteq A_x$ such that $A \nsubseteq B$, $y \in \cl(B)$ cannot 
hold.
Consider $B \subseteq A_x$ with $A \nsubseteq B$ and suppose that $y \in \cl(B)$.
Note that $B$ must exist as the empty set is always a possible choice.
Since $y \in \cl(A)$, we have $y \in \cl(A) \cap \cl(B)$.
Furthermore, $\cl(A \cap B) \subset \cl(A)$ as $A \cap B \subset A$ and $\cl(A \cap B) 
\cap A_x = A \cap B$ by Proposition \ref{prop:trace}.
Moreover, $A_x$ is independent, so that $\cl(A) \cap \cl(B) = \cl(A \cap B)$.
Hence, $y \in \cl(A \cap B) \subset \cl(A)$, a contradiction with $A$ being 
a minimal generator of $y$.
\qed
\end{proof}

Thus, when $A_x$ is independent, each non-empty subset $A$ of $A_x$ is the unique minimal 
generator of some $y$ being included in $A_x$.
As a consequence, we obtain the following theorem.

\begin{theorem} \label{thm:bounded-exp}
Let $\cs$ be a biatomic atomistic closure system. 
If for any $x \in \U$ and any minimal generator $A_x$ of $x$, $A_x$ is independent,
then $\C(\cs) \leq \lceil \log_2(\card{\U} + 1) \rceil$.
\end{theorem}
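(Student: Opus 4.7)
The plan is to pick a minimal generator $A_x$ of maximum size, so that $\card{A_x} = \C(\cs)$ by the equivalence between Carath\'{e}odory number and maximum minimal generator size recalled in the preliminaries. The strategy is then to exhibit an injection from the $2^{\card{A_x}} - 1$ non-empty subsets of $A_x$ into $\U$, which will immediately yield $\card{\U} \geq 2^{\C(\cs)} - 1$ and hence the claimed logarithmic bound. Since $A_x$ is independent by hypothesis, Lemma \ref{lem:unique} applies to every non-empty subset $A \subseteq A_x$.

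To construct the injection, I apply Lemma \ref{lem:unique} to each non-empty $A \subseteq A_x$: it produces some $y \in \U$ such that $A$ is the unique minimum subset of $A_x$ whose closure contains $y$. Fix any such $y$ and denote it by $y_A$; this defines a map $A \mapsto y_A$ from the non-empty subsets of $A_x$ into $\U$. For injectivity, suppose $A \neq A'$ are non-empty subsets of $A_x$ with $y_A = y_{A'} =: y$. Then both $A$ and $A'$ would serve as \emph{the} unique minimum subset of $A_x$ containing $y$ in its closure, contradicting the uniqueness clause of Lemma \ref{lem:unique}. Thus $A \mapsto y_A$ is injective.

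Counting then gives $\card{\U} \geq 2^{\C(\cs)} - 1$, i.e.\ $2^{\C(\cs)} \leq \card{\U} + 1$. Taking logarithms yields $\C(\cs) \leq \log_2(\card{\U} + 1)$, and since $\C(\cs)$ is an integer, $\C(\cs) \leq \lceil \log_2(\card{\U} + 1) \rceil$, as required. The substantive work is entirely concentrated in the preparatory Lemmas \ref{lem:induction}, \ref{lem:mingen}, and \ref{lem:unique}; once those are established, the theorem reduces to this short injective counting argument, so I do not expect any genuine obstacle in the proof itself.
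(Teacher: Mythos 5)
Your proof is correct and follows essentially the same route as the paper: both pick a minimal generator of maximum size $\C(\cs)$, invoke Lemma \ref{lem:unique} on every non-empty subset, and count to get $2^{\C(\cs)} - 1 \leq \card{\U}$. The only difference is cosmetic --- you spell out the injection $A \mapsto y_A$ and its injectivity via the uniqueness clause, which the paper leaves implicit.
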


\begin{proof}
Let $A_x$ be a minimal generator of $x$, $x \in \U$ such that $\C(\cs) = 
\card{A_x}$.
As $A_x$ is a minimal generator, $\cl(A) \neq \cl(A')$ for any distinct $A, A' \subseteq 
A_x$, due to Proposition \ref{prop:trace}. 
Furthermore $A_x$ is independent by assumption.
Thus, by Lemma \ref{lem:unique}, for each non-empty subset of $A$, there exists $y \in 
\U$ such that $A$ is the unique minimum subset of $A_x$ with $y \in \cl(A)$.
Consequently, $\U$ must contain at least $2^{\card{A_x}} - 1$ elements in order to cover 
each non-empty subset of $A_x$, that is $2^{\card{A_x}} - 1 \leq \card{\U}$, which can be 
rewritten as $\card{A_x} = \C(\cs) \leq \lceil \log_2(\card{\U} + 1) \rceil$ as required.
\qed
\end{proof}

Now let $\cs$ be a biatomic atomistic closure system on $\U$ given by some implicational 
base $\IS$ and let $\graph = (\U, \edges)$ be a consistency-graph.
Assume that every minimal generator is independent.
By Theorem \ref{thm:bounded-exp}, we have that $\card{\U}$ has exponential size with 
respect to $\C(\cs)$, and by Proposition \ref{prop:key-union}, it must be that 
the size of a key in $\K$ cannot exceed $2 \times \lceil \log_2(\card{\U} + 1) \rceil$.
Thus, with respect to $\IS$, $\graph$ and $\U$, $\K$ will have size quasi-polynomial in 
the worst case.
Using the same algorithm as in Section \ref{sec:mingen}, we obtain the next theorem.

\begin{theorem} \label{thm:quasi-poly}
Let $\IS$ be an implicational base of a biatomic atomistic closure system $\cs$ over $\U$ 
and $\graph$ a consistency-graph.
If for any $x \in \U$ and any minimal generator $A_x$ of $x$, $A_x$ is independent, then 
\csmc{MCCEnum} can be solved in output-quasipolynomial time.
\end{theorem}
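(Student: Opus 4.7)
The plan is to reuse the two-step scheme of Section \ref{sec:mingen} -- first compute the keys of the closure system $\cs \cap (\is(\graph) \cup \{\U\})$ via Lucchesi--Osborn, then enumerate their maximal independent sets via Eiter--Gottlob -- and to quantify how much the complexity degrades when $\C(\cs)$ is only logarithmic in $\card{\U}$ rather than constant. The main obstacle is not a new structural insight, but a careful bookkeeping verification that every step still composes into a quasi-polynomial bound in $N + \card{\Sol}$.

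First I would invoke Theorem \ref{thm:bounded-exp}: the independence hypothesis on minimal generators forces $\C(\cs) \leq \lceil \log_2(\card{\U}+1) \rceil$. By Proposition \ref{prop:key-union}, every $K \in \K$ has the form $A_u \cup A_v$ for some edge $uv \in \edges$, where $A_u$ (resp.\ $A_v$) is a minimal generator of $u$ (resp.\ $v$); hence $\card{K} \leq 2\lceil \log_2(\card{\U}+1) \rceil$. Counting keys first by their underlying edge and then by their elements in $\U$ yields
\[
\card{\K} \;\leq\; \card{\edges} \cdot \card{\U}^{2\lceil \log_2(\card{\U}+1) \rceil},
\]
which is quasi-polynomial in $N = \card{\IS} + \card{\U} + \card{\edges}$.

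Second, I would reuse the construction $\IS' = \IS \cup \{uv \imp \U \mid uv \in \edges\}$ from the proof of Theorem \ref{thm:incr-poly-k-mingen}, whose induced closure system is exactly $\cs \cap (\is(\graph) \cup \{\U\})$. Running Lucchesi--Osborn on $\IS'$ enumerates $\K$ in time $\csf{poly}(\card{\IS'} + \card{\U} + \card{\K})$, which is quasi-polynomial in $N$ by the bound above. Passing $\K$, viewed as a hypergraph, to the Eiter--Gottlob algorithm then produces $\mis(\K) = \Sol$ with incremental delay $\csf{poly}(\card{\K} + i)$, and summing over $i \leq \card{\Sol}$ preserves a bound of the form $(N + \card{\Sol})^{\csf{polylog}(N + \card{\Sol})}$. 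The only part that requires care is this last composition: one must check that substituting $\card{\K} \leq \card{\U}^{O(\log \card{\U})}$ into the incremental-polynomial complexity of Eiter--Gottlob genuinely yields an output-quasi-polynomial overall running time, which follows from the closure of quasi-polynomial functions under polynomial substitution but deserves to be stated explicitly to close the argument.
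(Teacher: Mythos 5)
Your first step (bounding $\C(\cs)$ via Theorem \ref{thm:bounded-exp}, bounding key size via Proposition \ref{prop:key-union}, and running Lucchesi--Osborn on $\IS' = \IS \cup \{uv \imp \U \mid uv \in \edges\}$ to get $\K$ in time quasi-polynomial in $N$) matches the paper's proof exactly and is correct. The gap is in your second step. You assert that the Eiter--Gottlob algorithm enumerates $\mis(\K)$ with incremental delay $\poly(\card{\K} + i)$ for $\K$ ``viewed as a hypergraph.'' That guarantee is only available for restricted hypergraph classes; the relevant one here, used in Theorem \ref{thm:incr-poly-k-mingen}, is hypergraphs of \emph{bounded} edge size, where the degree of the polynomial depends on the bound. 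In the present setting the edges of $\K$ have size up to $2\lceil \log_2(\card{\U}+1)\rceil$, which is not constant, so the incremental-polynomial claim does not follow; an incremental-polynomial algorithm for dualizing arbitrary hypergraphs would settle the long-standing open status of \csmc{Trans-Enum}, and cannot simply be cited. (Even the bounded-dimension analysis, applied with dimension $O(\log \card{\U})$, yields increments of order $\card{\U}^{O(\log \card{\U})}$ rather than $\poly(\card{\K}+i)$, so if you want to go this route you must track that dependence explicitly.)

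The paper closes this step differently: it feeds $\K$ to the Fredman--Khachiyan dualization algorithm, whose running time $(\card{\K} + k)^{o(\log(\card{\K} + k))}$ holds for \emph{arbitrary} hypergraphs, and then verifies by direct substitution of $\card{\K} \leq n^{\log n}$ that the total time is $(k+n)^{O(\log^3(k+n))}$, hence output-quasipolynomial. Your closing remark that ``quasi-polynomial functions are closed under polynomial substitution'' is the right instinct for why the composition survives, but it is applied to a subroutine whose complexity you have misstated. Replace Eiter--Gottlob by Fredman--Khachiyan (or justify a quasi-polynomial increment bound for logarithmic edge size) and the argument goes through as in the paper.
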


\begin{proof}
For clarity, we put $n = \card{\U}$ and $k$ as the total size of the output 
$\mis(\K)$.
$\K$ can be computed in incremental-polynomial time with the algorithm in 
\cite{lucchesi1978candidate}.
Furthermore, by Theorem \ref{thm:bounded-exp}, the total size of $\K$ is bounded by 
$n^{\log(n)}$.
Thus, this first step runs in time $\csf{poly}(\card{\IS} + \card{\graph} + n + 
n^{\log(n)})$, which is bounded by $\csf{poly}(\card{\IS} + \card{\graph} + 
n)^{\log(n)}$ being quasipolynomial in the size of $\IS$, $\graph$, $\K$ and $\U$.
To compute $\mis(\K) = \Sol$ we use the algorithm of Fredman and Khachiyan 
\cite{fredman1996complexity} whose running time is bounded by $(n^{\log(n)} + 
k)^{o(\log(n^{\log(n)} + k))}$.
In our case, we can derive the following upper bounds:
\begin{align*}
(n^{\log(n)} + k)^{o(\log(n^{\log(n)} + k))}
& \leq (k + n)^{\log(n) \times o(\log(k + n)^{\log(n)})} \\ 
& \leq (k + n)^{O(\log^3(k + n))}
\end{align*}
Thus, the time needed to compute $\mis(\K)$ from $\K$ is output-quasipolynomial in 
the size of $\U$ and $\Sol$.
Consequently, the running time of the whole algorithm is bounded by 
\[\csf{poly}(\card{\IS} + \card{\graph} + n)^{\log(n)} + (k + n)^{O(\log^3(k + n))} \] 
which is indeed quasipolynomial in the size of the input $\IS$, $\U$, $\graph$ and the 
output $\mis(\K) = \Sol$.
\qed
\end{proof}

To conclude this section, we show that atomistic modular closure systems 
\cite{stern1999semimodular, gratzer2011lattice} satisfy conditions of Theorem 
\ref{thm:quasi-poly}. 
Recall that a closure system $\cs$ is modular if for any $F_1, F_2, F_3 \in \cs$, $F_1 
\subseteq F_2$ implies $\cl(F_1 \cup (F_2 \cap F_3)) = \cl(F_1 \cup F_3) \cap F_2$.
It was proved for instance in \cite{bennett1987biatomic} (Theorem 7) that atomistic 
modular closure systems are biatomic.
To show that any minimal generator is independent, we make use of the following result.

\begin{theorem}(Reformulated from \cite{gratzer2011lattice}, Theorem 360) 
\label{thm:gratzer}
Let $\cs$ be a modular closure system. 
A subset $A = \{a_1, \dots, a_k\} \subseteq \U$ is independent if and only if
$ \cl(a_1) \cap \cl(a_2) = \cl(a_1 a_2) \cap \cl(a_3) = \dots = \cl(a_1 \dots a_{k - 1}) 
\cap \cl(a_k) = \emptyset$.

\end{theorem}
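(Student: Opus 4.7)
The plan is to prove the two implications separately, the forward direction being essentially immediate and the backward one carried out by induction on $k$.

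For $(\Rightarrow)$, assume $A = \{a_1, \dots, a_k\}$ is independent. For each $i \in \{2, \dots, k\}$, apply the definition of independence to $Y_1 = \{a_1, \dots, a_{i-1}\}$ and $Y_2 = \{a_i\}$. These are disjoint subsets of $A$, hence $Y_1 \cap Y_2 = \emptyset$ and $\cl(Y_1 \cap Y_2) = \cl(\emptyset) = \emptyset$ by standardness of $\cs$. Independence yields $\cl(a_1 \dots a_{i-1}) \cap \cl(a_i) = \emptyset$, which is the $i$-th chain identity.

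For $(\Leftarrow)$, I proceed by induction on $k$; the cases $k = 1, 2$ are direct from the definition. For the inductive step, suppose the equivalence holds for sets of size $k - 1$, and let $A = \{a_1, \dots, a_k\}$ satisfy the chain conditions. Writing $A' = \{a_1, \dots, a_{k-1}\}$, the set $A'$ also satisfies the chain conditions, so by induction $A'$ is independent. The first step is to establish the auxiliary identity $\cl(Y \cup \{a_k\}) \cap \cl(A') = \cl(Y)$ for every $Y \subseteq A'$. This follows from applying modularity to $F_1 = \cl(Y) \subseteq F_2 = \cl(A')$ and $F_3 = \cl(a_k)$, using the chain identity $\cl(A') \cap \cl(a_k) = \emptyset$ to conclude that $F_2 \cap F_3 = \emptyset$.

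To finish, I fix arbitrary $Y_1, Y_2 \subseteq A$ and must show $\cl(Y_1) \cap \cl(Y_2) \subseteq \cl(Y_1 \cap Y_2)$, the reverse inclusion being immediate by isotonicity. I split on the membership of $a_k$. If $a_k \notin Y_1 \cup Y_2$, both sets lie in $A'$ and the inductive hypothesis applies directly. If $a_k \in Y_1$ and $a_k \notin Y_2$, write $Y_1 = Y_1' \cup \{a_k\}$ with $Y_1' \subseteq A'$; any $x \in \cl(Y_1) \cap \cl(Y_2)$ belongs to $\cl(A')$ since $Y_2 \subseteq A'$, so the auxiliary identity gives $x \in \cl(Y_1' \cup \{a_k\}) \cap \cl(A') = \cl(Y_1')$, and then $x \in \cl(Y_1') \cap \cl(Y_2) = \cl(Y_1' \cap Y_2) = \cl(Y_1 \cap Y_2)$ by the inductive hypothesis on $A'$. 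The main obstacle is the remaining case $a_k \in Y_1 \cap Y_2$, where I write $Y_j = Y_j' \cup \{a_k\}$ for $j = 1, 2$. Here I apply the modular law twice: first with $F_1 = \cl(a_k) \subseteq F_2 = \cl(Y_2)$ and $F_3 = \cl(Y_1')$ to rewrite $\cl(Y_1) \cap \cl(Y_2)$ as $\cl(\{a_k\} \cup (\cl(Y_2) \cap \cl(Y_1')))$; then, since $\cl(Y_1') \subseteq \cl(A')$, the auxiliary identity collapses $\cl(Y_2) \cap \cl(Y_1') = \cl(Y_2' \cup \{a_k\}) \cap \cl(Y_1')$ to $\cl(Y_2') \cap \cl(Y_1')$. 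The inductive hypothesis on $A'$ replaces this intersection by $\cl(Y_1' \cap Y_2')$, yielding $\cl(Y_1) \cap \cl(Y_2) = \cl((Y_1' \cap Y_2') \cup \{a_k\}) = \cl(Y_1 \cap Y_2)$, which completes the induction.
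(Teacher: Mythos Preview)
Your proof is correct. Note, however, that the paper does not actually supply its own proof of this statement: Theorem~\ref{thm:gratzer} is simply quoted from Gr\"atzer's \emph{Lattice Theory: Foundation} (Theorem~360) and then used as a black box to derive Proposition~\ref{prop:independence-modular}. So there is no paper-proof to compare against.

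That said, your argument is essentially the classical one found in the literature. The forward implication is immediate from the definition of independence together with $\cl(\emptyset)=\emptyset$. For the converse, the standard proof also proceeds by induction on $k$: one first uses the modular law with the last chain identity $\cl(a_1\dots a_{k-1})\cap\cl(a_k)=\emptyset$ to obtain the ``absorption'' identity $\cl(Y\cup\{a_k\})\cap\cl(A')=\cl(Y)$ for $Y\subseteq A'$, and then reduces every instance of $\cl(Y_1)\cap\cl(Y_2)$ to the inductive hypothesis on $A'=\{a_1,\dots,a_{k-1}\}$ by the same case split on the membership of $a_k$. Your handling of the case $a_k\in Y_1\cap Y_2$ --- applying modularity with $F_1=\cl(a_k)\subseteq F_2=\cl(Y_2)$, $F_3=\cl(Y_1')$, then using $\cl(Y_1')\subseteq\cl(A')$ to invoke the auxiliary identity --- is exactly the key maneuver, and it is carried out correctly.
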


\begin{proposition} \label{prop:independence-modular}
Let $\cs$ be an atomistic modular closure system.
Let $A_x$ be a minimal generator of some $x \in \U$. 
Then $A_x$ is independent.
\end{proposition}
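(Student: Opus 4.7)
The plan is to verify the condition of Theorem \ref{thm:gratzer} directly. Let $A_x = \{a_1, \dots, a_k\}$ be a minimal generator of some $x \in \U$ in the atomistic modular closure system $\cs$. Since atomistic modular closure systems are modular, Theorem \ref{thm:gratzer} applies, so it suffices to show that
\[ \cl(a_1 \dots a_i) \cap \cl(a_{i+1}) = \emptyset \quad \text{for all } i \in [k-1]. \]

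The key observation is that atomisticity trivially handles the right-hand factor: since $\cs$ is atomistic, we have $\cl(a_{i+1}) = \{a_{i+1}\}$. Therefore, the displayed intersection reduces to the question of whether $a_{i+1} \in \cl(\{a_1, \dots, a_i\})$. This is exactly what the minimal generator property forbids, as I will extract via Proposition \ref{prop:trace}.

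More precisely, applying Proposition \ref{prop:trace} to the subset $A = \{a_1, \dots, a_i\}$ of $A_x$ yields $\cl(\{a_1, \dots, a_i\}) \cap A_x = \{a_1, \dots, a_i\}$. Since $a_{i+1} \in A_x \setminus \{a_1, \dots, a_i\}$, this forces $a_{i+1} \notin \cl(\{a_1, \dots, a_i\})$, so $\cl(a_1 \dots a_i) \cap \cl(a_{i+1}) = \{a_1, \dots, a_i\text{-closure}\} \cap \{a_{i+1}\} = \emptyset$. Running this argument for every $i$ establishes the chain of empty intersections required by Theorem \ref{thm:gratzer}, from which independence of $A_x$ follows.

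There is essentially no obstacle here beyond correctly combining the two cited facts; the proof is a direct bookkeeping exercise. The only subtle point worth checking is that Theorem \ref{thm:gratzer}'s condition is stated for the enumeration $a_1, \dots, a_k$ of $A_x$, but since the minimal generator condition (and hence the argument via Proposition \ref{prop:trace}) is symmetric in the elements of $A_x$, any enumeration works, which matches the ordering-independent nature of the independence property.
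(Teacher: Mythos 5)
Your proof is correct and follows essentially the same route as the paper's: both verify the chain condition of Theorem \ref{thm:gratzer} by combining atomisticity (so $\cl(a_{i+1}) = \{a_{i+1}\}$) with Proposition \ref{prop:trace} (so $a_{i+1} \notin \cl(a_1 \dots a_i)$). No further comment is needed.
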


\begin{proof}
Let $A_x = \{a_1, \dots, a_k\}$ be a minimal generator for some $x \in \U$.
Then, by Proposition \ref{prop:trace}, $\cl(a_1 \dots a_i) \cap A_x = a_1 \dots 
a_i$ for any $i \in [k]$.
Furthermore, $\cl(a) = \{a\}$ for any $a \in \U$ since $\cs$ is atomistic.
Thus we conclude that $\cl(a_1 \dots a_i) \cap \cl(a_{i + 1}) = \emptyset$ for any $i\in 
[k - 1]$ as $a_{i + 1} \notin a_1 \dots a_i$.
It follows by Theorem \ref{thm:gratzer} that $A_x$ is indeed independent.
\qed
\end{proof}

\begin{corollary}
Let $\IS$ be an implicational base over $\U$ and $\graph = (\U, \edges)$.
Then \csmc{MCCEnum} can be solved in output-quasipolynomial time if:
\begin{itemize}
	\item $\cs$ is biatomic atomistic and has Carath\'{e}odory number $2$ (including 
	convex subsets of a poset and monophonically convex sets of a chordal graph),
	\item $\cs$ is atomistic modular.
\end{itemize}
\end{corollary}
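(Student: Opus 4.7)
The plan is to reduce both bullets to Theorem \ref{thm:quasi-poly}, so the work consists in verifying, for each case, the two hypotheses of that theorem: that $\cs$ is biatomic atomistic, and that every minimal generator of every $x \in \U$ is independent. The corollary then follows at once.

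The atomistic modular case is the easiest: biatomicity is exactly Theorem 7 of \cite{bennett1987biatomic} (cited just above in the same section), and independence of every minimal generator is precisely Proposition \ref{prop:independence-modular}. So this case is a direct assembly of pieces already prepared.

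For the first bullet, biatomicity and atomisticity are assumed, so only independence of minimal generators remains. Since $\C(\cs) \leq 2$, every minimal generator has size at most $2$. A singleton minimal generator is trivially independent. For a two-element minimal generator $A_x = \{a_1, a_2\}$, I would unfold the definition $\cl(Y_1 \cap Y_2) = \cl(Y_1) \cap \cl(Y_2)$ over all pairs $Y_1, Y_2 \subseteq A_x$; the only nontrivial case is $Y_1 = \{a_1\}, Y_2 = \{a_2\}$, which reduces to $\cl(a_1) \cap \cl(a_2) = \cl(\emptyset)$. Atomisticity gives $\cl(a_i) = \{a_i\}$ and standardness gives $\cl(\emptyset) = \emptyset$, so this identity holds immediately.

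The remaining loose end is the parenthetical, which asserts that convex subsets of a poset and monophonically convex subsets of a chordal graph fall into the first bullet. Their Carath\'{e}odory number was already shown to be at most $2$ in Section \ref{sec:mingen}. Atomisticity is clear since a singleton is convex (resp.\ monophonically convex), hence $\cl(x) = \{x\}$. For biatomicity the argument is direct: if $y \in \cl(F_1 \cup F_2)$ with $y \notin F_1 \cup F_2$, then $y$ lies between (resp.\ on a chordless path between) two elements $x_1, x_2$ of $F_1 \cup F_2$; convexity of each $F_i$ forces $x_1$ and $x_2$ into different $F_i$'s, and $y \in \cl(x_1 x_2)$ follows. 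The only genuinely delicate point, and the main obstacle I would anticipate, is this last verification in the chordal case, since one must ensure that the chosen chordless path from $x_1$ to $x_2$ behaves well when restricted to atoms in two distinct convex sets; this should nonetheless follow cleanly from the definition in \cite{farber1986convexity}.
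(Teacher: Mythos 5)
Your proposal is correct and follows essentially the same route as the paper: both bullets are reduced to Theorem \ref{thm:quasi-poly}, the modular case by citing Theorem 7 of \cite{bennett1987biatomic} together with Proposition \ref{prop:independence-modular}, and the Carath\'{e}odory-number-$2$ case by observing that atomisticity (plus standardness) makes any one- or two-element minimal generator independent, exactly as in the paper's one-line argument. Your additional verification that poset convexity and monophonic convexity in chordal graphs are themselves biatomic is extra diligence the paper does not carry out (it only asserts the parenthetical, having established the Carath\'{e}odory bounds earlier), but it does not change the proof of the corollary itself.
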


\begin{proof}
For the first statement, note that in an atomistic closure system with Carath\'{e}odory 
number $2$, any minimal generator $A_x$ contains exactly two elements $a_1, a_2$.
Since $\cs$ is atomistic, $a_1$ and $a_2$ are closed and the independence of $A_x$ 
follows.

If $\cs$ is atomistic modular, biatomicity follows from \cite{bennett1987biatomic} 
(Theorem 7), and independence from Proposition \ref{prop:independence-modular}.
\qed
\end{proof}

\begin{remark}
For atomistic modular closure systems, the connection between the size of $\U$ and the 
Carath\'{e}odory number may also be derived from counting arguments on subspaces of 
vector spaces \cite{wild1996minimal}.
\end{remark}

\section{Conclusion} 
\label{sec:conclusion}

In this paper we proved that given a consistency-graph over an implicational base, the 
enumeration of maximal consistent closed sets is impossible in output-polynomial time 
unless $\Po = \NP$.
Moreover, we showed that this problem, called \csmc{MCCEnum}, is already intractable for 
the well-known class of lower bounded closure systems.
On the positive side, we proved that when the size of a minimal generator is bounded by a 
constant, the enumeration of maximal consistent closed sets can be conducted in 
incremental polynomial time.
This result covers various classes of convex geometries.
Finally, we proved that in biatomic atomistic closure systems, \csmc{MCCEnum} can be 
solved in output-quasipolynomial time provided minimal generators obey an independence 
condition.
This applies in particular to atomistic modular closure systems.
In Figure \ref{fig:hierarchy}, we summarize our results in the hierarchy of closure 
systems.

For future research, we would like to understand which properties or paramaters of 
closure systems make the problem intractable or solvable in output-polynomial time. 
For instance, we have seen that a bounded Carath\'{e}odory number gives an 
incremental-polynomial time algorithm, while lower boundedness makes the problem 
intractable.
Another question is the following: is the problem still hard if the closure system is 
given by a context (equivalently, its meet-irreducible elements)?  
The question is particularly interesting for classes such as semidistributive 
closure systems where we can compute the context in polynomial 
time in the size of an implicational base.

\begin{figure}[ht!]
\includegraphics[scale=0.95, page=3]{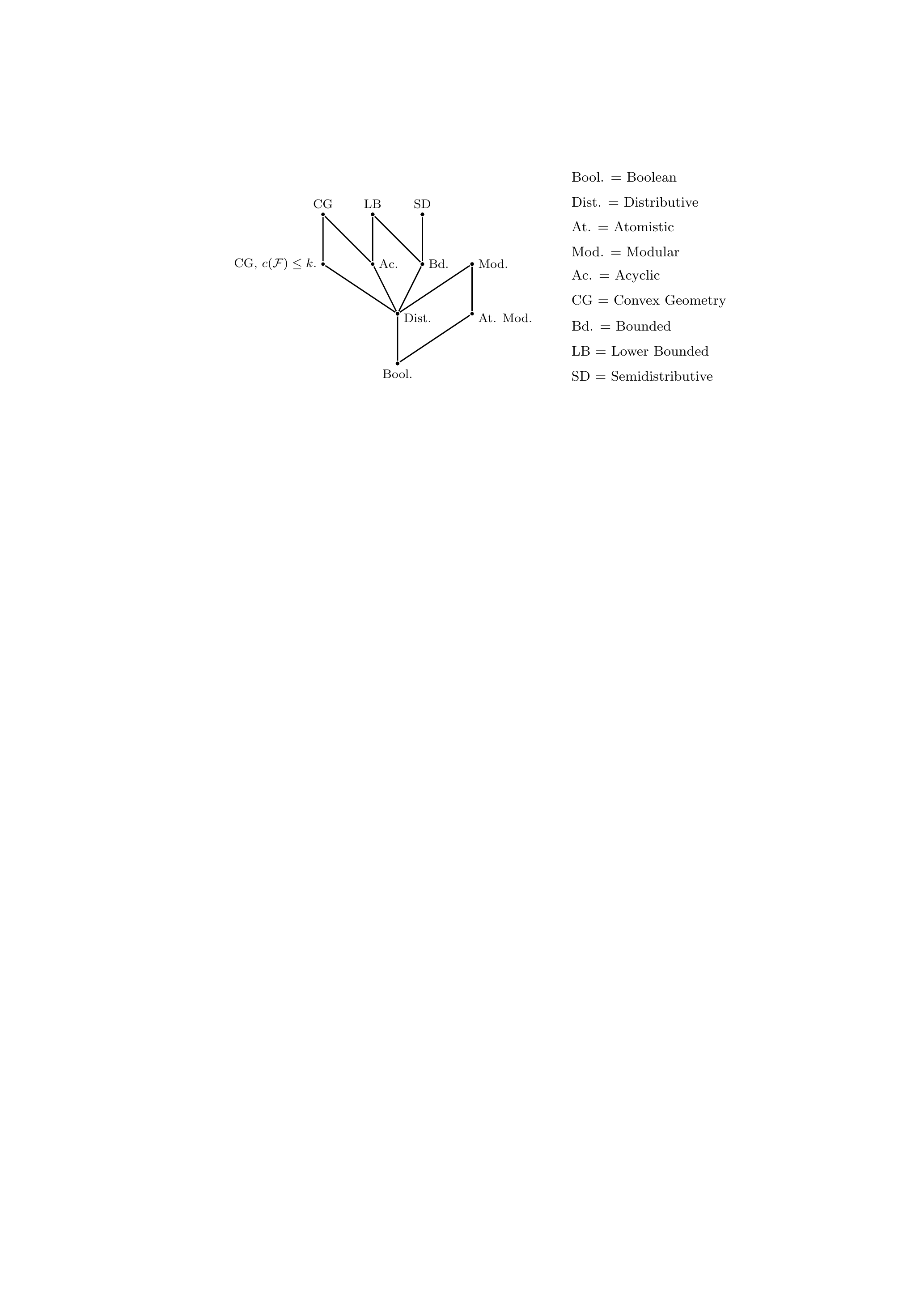}
\caption{The complexity of \csmc{MCCEnum} in the hierarchy of closure systems}
\label{fig:hierarchy}
\end{figure}

\vspace{-2em} 

\bibliographystyle{acm}
\bibliography{ICFCA}

\end{document}